\documentclass[a4paper, 12pt]{article} % General settings in the beginning (defines the \linespread{})
\linespread{1.5}
% 11pt = is the font size
% A4 is the paper size
% “article” is your document class

%----------------------------------------------------------------------------------------
%	Packages
%----------------------------------------------------------------------------------------

% Necessary
\usepackage[german,english]{babel} % English and German language 
\usepackage{booktabs} % Horizontal rules in tables 
% For generating tables, use “LaTeX” online generator (https://www.tablesgenerator.com)
\usepackage{comment} % Necessary to comment several paragraphs at once
\usepackage[utf8]{inputenc} % Required for international characters
\usepackage[T1]{fontenc} % Required for output font encoding for international characters

% Might be helpful
\usepackage{amsmath,amsfonts,amsthm} % Math packages which might be useful for equations
\usepackage{tikz} % For tikz figures (to draw arrow diagrams, see a guide how to use them)
\usepackage{tikz-cd}
\usetikzlibrary{positioning,arrows} % Adding libraries for arrows
\usetikzlibrary{decorations.pathreplacing} % Adding libraries for decorations and paths
\usepackage{tikzsymbols} % For amazing symbols ;) https://mirror.hmc.edu/ctan/graphics/pgf/contrib/tikzsymbols/tikzsymbols.pdf 
\usepackage{blindtext} % To add some blind text in your paper

%---------------------------------------------------------------------------------
% Additional settings
%---------------------------------------------------------------------------------

%---------------------------------------------------------------------------------
% Define your margins
\usepackage{geometry} % Necessary package for defining margins

\geometry{
	top=2cm, % Defines top margin
	bottom=2cm, % Defines bottom margin
	left=2cm, % Defines left margin
	right=2cm, % Defines right margin
	%includehead, % Includes space for a header
	%includefoot, % Includes space for a footer
	%showframe, % Uncomment if you want to show how it looks on the page 
}

\setlength{\parindent}{15pt} % Adjust to set you indent globally 

%---------------------------------------------------------------------------------
% Define your spacing
\usepackage{setspace} % Required for spacing
% Two options:
\linespread{1.5}
%\onehalfspacing % one-half-spacing linespread

%----------------------------------------------------------------------------------------
% Define your fonts
\usepackage[T1]{fontenc} % Output font encoding for international characters
\usepackage[utf8]{inputenc} % Required for inputting international characters

%\usepackage{palatino} % Use the XCharter font

%---------------------------------------------------------------------------------
% Define your headers and footers

%\usepackage{fancyhdr} % Package is needed to define header and footer
%\pagestyle{fancy} % Allows you to customize the headers and footers

%\renewcommand{\sectionmark}[1]{\markboth{#1}{}} % Removes the section number from the header when \leftmark is used

% Headers
%\lhead{} % Define left header
%\chead{\textit{}} % Define center header - e.g. add your paper title
%\rhead{} % Define right header

% Footers
%\lfoot{} % Define left footer
%\cfoot{\footnotesize \thepage} % Define center footer
%\rfoot{ } % Define right footer

%---------------------------------------------------------------------------------
%	Add information on bibliography
%\usepackage{natbib} % Use natbib for citing
%\usepackage{har2nat} % Allows to use harvard package with natbib

% https://mirror.reismil.ch/CTAN/macros/latex/contrib/har2nat/har2nat.pdf

% For citing with natbib, you may want to use this reference sheet: 
% http://merkel.texture.rocks/Latex/natbib.php

% 在下面引用需要使用的包

\usepackage{tikz}
\usetikzlibrary{trees}
\usepackage{amsmath,amscd,amsbsy,amssymb,latexsym,url,bm,amsthm}
\usepackage{epsfig,graphicx}
\usepackage{enumitem,balance}
\usepackage{wrapfig}
\usepackage{mathrsfs,euscript}
\usepackage{hyperref}

\usepackage{float}
\usepackage{geometry}
\usepackage{listings}
\usepackage{physics}

\usepackage{amssymb}

\usepackage{algorithm} 
\usepackage{algpseudocode}
\usepackage{amsmath,amscd,amsbsy,amssymb,latexsym,url,bm}
\usepackage{epsfig,graphicx}
\usepackage{balance}
\usepackage{wrapfig}
\usepackage{mathrsfs,euscript}
\usepackage{hyperref}

\usepackage{float}
\usepackage{geometry}
\usepackage{listings}
\usepackage{physics}

\usepackage{algorithm} 
\usepackage{algpseudocode} 
\usepackage[group-separator={,}]{siunitx}
\usepackage{subcaption}

\usepackage{enumitem}
% \usepackage{enumerate}

% 画图设置

\tikzset{
treenode/.style = {circle, minimum size=#1,inner sep=2pt, outer sep=0pt},
%treenode/.style = {}     
treenode/.default = 25pt % size of the circle diameter 
}

\tikzset{
rectnode/.style = {rectangle,minimum size=#1,inner sep=4pt, outer sep=0pt},
%treenode/.style = {}     
rectnode/.default = 20pt 
}

\usetikzlibrary{matrix}
\tikzset{r/.style={fill=red}}
\tikzset{b/.style={fill=cyan}}
\tikzset{w/.style={fill=white}}

\tikzset{boardstyle/.style={matrix of nodes,
        nodes in empty cells,
        row sep=-\pgflinewidth,
        column sep=-\pgflinewidth,
        nodes={draw,minimum width=0.3cm,minimum height=0.3cm,anchor=center}}}

% 宏定义

\providecommand{\keywords}[1]{\textbf{\textit{Keywords:}} #1}

\newtheorem{theorem}{Theorem}
\newtheorem{corollary}{Corollary}

% 这是我自己添加的
\newtheorem{example}{Example}
\newtheorem{proposition}{Proposition}
\newtheorem{definition}{Definition}

% fonts
\usepackage[sc]{mathpazo} % Loads required packages from the separate file 

%---------------------------------------------------------------------------------
%	General information
%---------------------------------------------------------------------------------
\title{Constrained Optimal Querying:\\ Huffman Coding and Beyond} % Adds your title
\author{
Shuyuan Zhang\and Jichen Sun\and Shengkang Chen \\
    % Add your first and last name
    %\thanks{} % Adds a footnote to your title
    %\institution{Shanghai Jiao Tong University} % Adds your institution
 }

\date{\small May 2022} % Adds the current date to your “cover” page; leave empty if you do not want to add a date

%---------------------------------------------------------------------------------
%	Define what’s in your document
%---------------------------------------------------------------------------------

\begin{document}

% If you want a cover page, uncomment "\input{content/0A-coverpage}" and comment "\maketitle" as well as "\input{content/0B-disclaimer}"

\maketitle % Print your title, author name and date; comment if you want a cover page 
%\input{content/0A-coverpage} % Adds a cover page; uncomment if you want a cover page
%\input{content/0B-disclaimer} % Gives you the word count; comment if you want a cover page 
%\input{content/0C-toc} % Adds a table of content; uncomment if required

%---------------------------------------------------------------------------------
% Abstract
%---------------------------------------------------------------------------------

\begin{abstract}

Huffman coding is well known to be useful in certain decision problems involving minimizing the average number of (freely chosen) queries to determine an unknown random variable. However, in problems where the queries are more constrained, the original Huffman coding no longer works. In this paper, we proposed a general model to describe such problems and two code schemes: one is Huffman-based, and the other called GBSC (Greedy Binary Separation Coding). We proved the optimality of GBSC by induction on a binary decision tree, telling us that GBSC is at least as good as Shannon coding. We then compared the two algorithms based on these two codes, by testing them with two problems: DNA detection and 1-player Battleship, and found both to be decent approximating algorithms, with Huffman-based algorithm giving an expected length $1.1$ times the true optimal in DNA detection problem, and GBSC yielding an average number of queries $1.4$ times the theoretical optimal in 1-player Battleship.

\end{abstract}

\keywords{
    Information theory,
    Huffman coding,
    greedy algorithms, 
    decision trees,
    Battleship game, 
    DNA exon detection
}

 % Adds your abstract

%----------------------------------------------------------------------------------------
% Introduction
%----------------------------------------------------------------------------------------
\setcounter{page}{1} % Sets counter of page to 1

\section{Introduction} % Add a section title

Our research is inspired by a problem in Cover's textbook\cite{cover1999elements} (see page 153).

\begin{example}[Bad wine]
\label{badWine}
One is given six bottles of wine. It is known that precisely one bottle has gone bad (tastes terrible). From inspection of the bottles it is determined that the probability $p_i$ that the $i$th bottle is bad is given by $(p_1, p_2, \dots, p_6) = (\frac{8}{23},\frac{6}{23},\frac{4}{23},\frac{2}{23},\frac{2}{23}, \frac{1}{23})$. Tasting will determine the bad wine. You can mix some of the wines in a fresh glass and sample the mixture. You proceed, mixing and tasting, stopping when the bad bottle has been determined. What is the minimum expected number of tastings required to determine the bad wine?
\end{example}

As an exercise in the textbook, this problem is well solved. In fact, it is equivalent to Huffman coding. The process that we determine which bottle of wine is bad can be regarded as a decision tree. At each node, we make an observation that has two possible outcomes (in this example, good or bad) and move to the left or right child node according to the outcome. This process continues until we reach a leaf node. If we represent each move to left with $0$ and each move to right with $1$, then a move sequence can be represented by a binary code. Minimizing the expected number of moves is equivalent to minimizing the expected code length, so we only need to use Huffman coding.

We want to generalize this problem. A natural generalization is changing the number of bad wines. It seems that the original Huffman coding strategy will still work, but unfortunately, it is wrong. This is shown by the following example.

\begin{example}[More bad wine]
\label{moreBadWine}
The background is similar to Example \ref{badWine}, but this time there are four bottles of wine to be examined and two of them are bad. Suppose the probability $p_{ij}$ that the $i$th and $j$th bottles are bad is given by $(p_{12},p_{13},p_{14},p_{23},p_{24},p_{34})=(0.1,0.1,0.15,0.15,0.3,0.2)$. What is the minimum expected number of tastings required to determine the two bad wines?
\end{example}

If we repeat the Huffman coding strategy, we will obtain the unique Huffman tree below.

\begin{figure}[H]
    \centering
    \begin{tikzpicture}[level distance=1.5cm,
      level 1/.style={sibling distance=4cm},
      level 2/.style={sibling distance=2cm}]
      \node [treenode, draw] {$1$}
        child
        {
            node [treenode, draw] {$0.4$}
            child 
            {
                node [treenode, draw]{$0.2$}
                child 
                {
                    node [treenode, draw]{$p_{12}$}
                }
                child 
                {
                    node [treenode, draw]{$p_{13}$}
                }
            }
            child 
            {
                node [treenode, draw]{$p_{34}$}
            }
        }
        child
        {
            node [treenode, draw] {$0.6$}
            child 
            {
                node [treenode, draw]{$p_{24}$}
            }
            child 
            {
                node [treenode, draw]{$0.3$}
                child 
                {
                    node [treenode, draw]{$p_{14}$}
                }
                child 
                {
                    node [treenode, draw]{$p_{23}$}
                }
            }
        };
    \end{tikzpicture}
    \caption{Huffman Tree of Example \ref{moreBadWine}.}
    % \label{fig:my_label}
\end{figure}
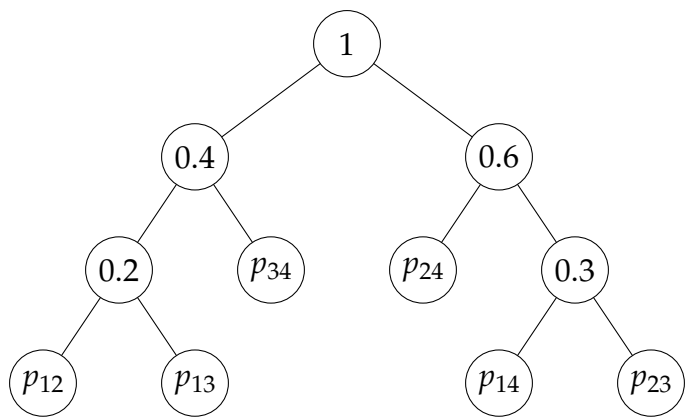

The expected Huffman code length is $2.5$, but it is not the answer. This is because the decision tree we have constructed is infeasible. To see this, just look at the root of the tree. Note that in both the left and right subtrees, every bottle of wine may be bad, so there is no way that we can decide in which direction we should move by just one observation.

% We wrote a program that enumerates all the possibilities, and it shows that the optimal solution to the example is the tree below, with a expected length of $114514$.

% (to be done, draw a tree)

Example \ref{moreBadWine} shows that changing Example \ref{badWine} a little bit will produce a much more difficult problem. In fact, Example \ref{moreBadWine} is just an instance of a huge family of similar problems, and finding the precise solution of these problems is usually extremely difficult.

In this report, we proposed a generalized model that formulates such type of problems, and analyzed three problems related to this model. We designed two generalized approximate algorithms for these problems. One algorithm is based on Huffman coding, while the other is based on a new code scheme (at least not being researched much). We call the new code GBSC (Greedy Binary Separation Code), and we proved an interesting property of GBSC.
 % Adds your introduction
%----------------------------------------------------------------------------------------
% Literature review
%----------------------------------------------------------------------------------------
\section{Models}
\subsection{Generalized Model}

We consider this generalized problem. Given a random variable $X$ with alphabet $\mathcal{X}$ and its distribution $p(x)$. We determine the value of $X$ by asking questions. The $i$-th question we ask is that "is $X$ in set $S_i$?", where $S_i \in \mathscr{A}, \forall i$. Suppose we need $N$ questions to determine the value of $X$. Our goal is to find out the minimum of $EN$ (the expectation of $N$) and if possible, its corresponding strategy.

\begin{definition}

The family of set $\mathscr{A}$ is called a decision set on $\mathcal{X}$.

\end{definition}

Obviously, different structures of the decision set will produce a bunch of completely different problems. In the rest of this chapter, we will introduce three specializations of this model, which are Huffman coding, the DNA detection problem and the Battleship problem.

\subsection{Huffman Coding: the Easy Case}

In Example \ref{badWine}, the decision set is $2^\mathcal{X}$. We have seen that in this case, the problem is equivalent to Huffman coding. However, the condition can be loosened slightly.

\begin{definition}

A decision set $\mathscr{A}$ on $\mathcal{X}$ is  decision-complete, if $\forall S\in 2^\mathcal{X} \setminus \{\emptyset, \mathcal{X}\}$, $S \in \mathscr{A} $ or $\mathcal{X} \setminus S \in \mathscr{A}$.

\end{definition}

\begin{proposition}
\label{decisionCompleteFeasible}
If $\mathscr{A}$ is decision-complete, then any decision tree of $X$ is feasible.
\end{proposition}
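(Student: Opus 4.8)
The plan is to argue on the recursive structure of the decision tree, node by node. To each node $v$ of a decision tree of $X$ I would associate the set $T(v)\subseteq\mathcal{X}$ of values of $X$ consistent with the answers along the path from the root to $v$; so $T(\text{root})=\mathcal{X}$, every leaf carries a singleton, and if $v$ is an internal node with children $v_L,v_R$ then $T(v)=T(v_L)\sqcup T(v_R)$ with both parts nonempty. With this language, feasibility of the tree means exactly that at each internal node $v$ the bipartition $\{T(v_L),T(v_R)\}$ of $T(v)$ is induced by some admissible question, i.e.\ there is $S\in\mathscr{A}$ with $S\cap T(v)=T(v_L)$. Note at the outset the harmless symmetry ``is $X\in S$?'' versus ``is $X\in\mathcal{X}\setminus S$?'': since $(\mathcal{X}\setminus S)\cap T(v)=T(v)\setminus S$, the two questions cut $T(v)$ into the same unordered pair of pieces, so it suffices to produce $S\in\mathscr{A}$ realizing the bipartition with the parts in \emph{either} order.

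The key step is then immediate. Fix an internal node $v$ and look at $T(v_L)$. It is nonempty because $v_L$ has a leaf below it, and it is a proper subset of $\mathcal{X}$ because $T(v_L)\subsetneq T(v)\subseteq\mathcal{X}$, where $T(v_L)\neq T(v)$ since $T(v_R)\neq\emptyset$. Hence $T(v_L)\in 2^{\mathcal{X}}\setminus\{\emptyset,\mathcal{X}\}$, so decision-completeness of $\mathscr{A}$ applies to it: either $T(v_L)\in\mathscr{A}$ or $\mathcal{X}\setminus T(v_L)\in\mathscr{A}$. In the first case take $S=T(v_L)$; conditioned on $X\in T(v)$, a ``yes'' gives $X\in T(v_L)$ and a ``no'' gives $X\in T(v)\setminus T(v_L)=T(v_R)$, exactly the split at $v$. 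In the second case take $S=\mathcal{X}\setminus T(v_L)$; now ``yes'' gives $X\in T(v)\setminus T(v_L)=T(v_R)$ and ``no'' gives $X\in T(v_L)$ — the same bipartition with the children's roles swapped, which by the remark above is still fine. Either way, the split at $v$ is implemented by a question in $\mathscr{A}$.

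Since $v$ was arbitrary, every internal split of the tree is realizable by an admissible question, and a routine induction on the height of the tree (leaves being trivially feasible, as they require no question) packages this into feasibility of the whole tree. I do not anticipate a real obstacle: the only point needing care is the asymmetry in the definition of decision-completeness, and that is precisely neutralized by the observation that querying $S$ and querying its complement partition the current candidate set $T(v)$ in the same way; the rest is bookkeeping about the partition structure of decision trees.
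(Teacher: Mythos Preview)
Your proof is correct and follows exactly the same idea as the paper's: the key observation is that asking ``is $X\in S$?'' and ``is $X\in\mathcal{X}\setminus S$?'' induce the same bipartition on the current candidate set, so decision-completeness guarantees every nontrivial split can be realized. The paper's proof is a one-line sketch of precisely this point, whereas you have spelled out the bookkeeping carefully; there is no substantive difference in approach.
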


\begin{proof}
This is easy to demonstrate. Asking "is $X$ in set $S_i$?" is equivalent to asking "is $X$ in set $\mathcal{X} \setminus S_i$?", and there is no point to ask whether $X$ is in $\emptyset$ or $\mathcal{X}$.
\end{proof}

\begin{corollary}
\label{equivHuffman}
If $\mathscr{A}$ is decision-complete, then Huffman coding will produce the optimal solution.
\end{corollary}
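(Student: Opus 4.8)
The plan is to reduce Corollary \ref{equivHuffman} to the classical optimality theorem for Huffman codes (\cite{cover1999elements}), by exhibiting a correspondence between feasible querying strategies and binary trees whose leaves are the symbols of $\mathcal{X}$, under which $EN$ becomes the expected codeword length.

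First I would make the dictionary precise in the easy direction. Any strategy for determining $X$ is an adaptive sequence of yes/no questions, hence is described by a binary decision tree $T$ in which each internal node carries a question, the two outgoing edges are labelled $0$ and $1$, and each leaf is labelled by the value of $X$ singled out along the corresponding root-to-leaf path. Since the strategy terminates with certainty, every $x \in \mathcal{X}$ labels at least one leaf; and if some $x$ labelled two leaves then along the way a question was asked whose answer was already forced, so after deleting such redundant questions we may assume $x \mapsto \mathrm{leaf}(x)$ is a bijection onto the leaves. Writing $\ell(x)$ for the depth of $\mathrm{leaf}(x)$, the number of questions asked when $X = x$ is exactly $\ell(x)$, so $EN = \sum_{x} p(x)\,\ell(x)$.

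Second I would treat the converse, which is where decision-completeness enters. Given \emph{any} binary tree $T$ with leaves in bijection with $\mathcal{X}$, Proposition \ref{decisionCompleteFeasible} says $T$ is feasible: each internal node splits the set of still-possible values into two nonempty parts $S$ and $\mathcal{X}\setminus S$, and decision-completeness guarantees that $S \in \mathscr{A}$ or $\mathcal{X}\setminus S \in \mathscr{A}$, so the needed question is available. Hence the set of achievable depth profiles $(\ell(x))_{x\in\mathcal{X}}$ is \emph{exactly} the set of leaf-depth profiles of binary trees on $|\mathcal{X}|$ leaves; after pruning any internal node having only one child (which only decreases depths, so $T$ may be taken to be a full binary tree), the Kraft equality identifies these with the length functions of complete prefix-free binary codes for $\mathcal{X}$.

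Finally, minimizing $EN = \sum_x p(x)\,\ell(x)$ over this set is, word for word, the problem that Huffman's algorithm solves optimally, and invoking that theorem finishes the proof. I expect the only real work to be the two bookkeeping reductions in the middle — discarding redundant questions so the leaf labelling is a bijection, and pruning single-child nodes so the tree is full — plus stating the strategy/tree correspondence carefully enough that ``feasible'' on the strategy side matches ``arbitrary binary tree'' on the tree side via Proposition \ref{decisionCompleteFeasible}. Once that is in place the optimality of Huffman coding is a black box and nothing further needs to be computed.
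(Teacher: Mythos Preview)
Your proposal is correct and follows the same route the paper takes: the paper states the corollary without an explicit proof, treating it as immediate from Proposition~\ref{decisionCompleteFeasible} together with the classical optimality of Huffman coding, and your argument is precisely the careful unpacking of that implication. The extra bookkeeping you flag (making the leaf labelling a bijection, pruning single-child nodes) is sound but more detail than the paper itself supplies.
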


% Corollary \ref{equivHuffman} tells us why Example \ref{badWine} is easy and Example \ref{moreBadWine} is hard. In Example \ref{badWine}, the decision set is decision-complete, so Huffman coding works. However, in Example \ref{moreBadWine}, we have seen that the decision set is no longer decision-complete, so Huffman coding might fail.

Proposition \ref{decisionCompleteFeasible} also provides a trivial but practical necessary condition for a set to be decision-complete.

\begin{corollary}
\label{minNumA}
If $\mathscr{A}$ is decision-complete, then $|\mathscr{A}| \ge 2^{|\mathcal{X}|-1} - 1$.
\end{corollary}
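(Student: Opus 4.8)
The plan is to prove Corollary~\ref{minNumA} by counting the number of distinct ``queries'' that a decision-complete family must be able to distinguish, and arguing that $\mathscr{A}$ must supply at least one set for each. Concretely, consider the collection of all proper nonempty subsets of $\mathcal{X}$, which has cardinality $2^{|\mathcal{X}|} - 2$. Define an equivalence relation on this collection by pairing each set $S$ with its complement $\mathcal{X}\setminus S$; since $S \neq \mathcal{X}\setminus S$ always (the ground set is finite and a set cannot equal its complement), every equivalence class has exactly two elements, so there are $(2^{|\mathcal{X}|} - 2)/2 = 2^{|\mathcal{X}|-1} - 1$ classes.

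The key step is then to invoke the definition of decision-completeness: for every $S \in 2^{\mathcal{X}}\setminus\{\emptyset,\mathcal{X}\}$, at least one of $S$ or $\mathcal{X}\setminus S$ lies in $\mathscr{A}$. This says exactly that $\mathscr{A}$ contains at least one representative from each of the $2^{|\mathcal{X}|-1}-1$ complementation classes. Distinct classes are disjoint, so these representatives are distinct elements of $\mathscr{A}$, and hence $|\mathscr{A}| \geq 2^{|\mathcal{X}|-1} - 1$. I would write this as a short paragraph rather than a displayed computation, since the only arithmetic involved is the division by two.

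I do not expect a genuine obstacle here — the statement is essentially a pigeonhole/counting observation. The one point that deserves a sentence of care is the claim that $S \neq \mathcal{X}\setminus S$ for every proper nonempty $S$, which guarantees the classes have size exactly two rather than one; this holds because $\mathcal{X}$ is nonempty, so $S$ and its complement cannot coincide (if they did, every element would be both in and out of $S$). A second minor point is that the corollary as stated does not explicitly restrict $\mathscr{A} \subseteq 2^{\mathcal{X}}\setminus\{\emptyset,\mathcal{X}\}$, but since $\emptyset$ and $\mathcal{X}$ play no role in the definition of decision-completeness and adding them only increases $|\mathscr{A}|$, the bound is unaffected; I would note this parenthetically. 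The whole proof should run to three or four sentences.

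Alternatively, one could derive the bound directly from Proposition~\ref{decisionCompleteFeasible} as the surrounding text hints: if $\mathscr{A}$ failed to meet the bound, some complementation class would be entirely absent from $\mathscr{A}$, and then one could exhibit a set $S$ such that neither $S$ nor $\mathcal{X}\setminus S$ is queryable, contradicting the ability to realize (for instance) a decision tree whose root splits off $S$. I find the direct counting argument cleaner, so I would lead with that and perhaps remark on the connection to Proposition~\ref{decisionCompleteFeasible} afterward.
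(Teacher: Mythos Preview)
Your counting argument is correct and is precisely the natural proof: the $2^{|\mathcal{X}|}-2$ proper nonempty subsets split into $2^{|\mathcal{X}|-1}-1$ complementation pairs, and decision-completeness forces $\mathscr{A}$ to contain a representative of each. The paper itself gives no explicit proof of this corollary, treating it as immediate from the definition of decision-completeness, so your write-up is exactly what is implicitly intended.
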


Using Corollary \ref{minNumA}, it is easy to prove that the decision set of Example \ref{moreBadWine} is not decision-complete,  so the Huffman coding strategy might fail.

\begin{proposition}
The decision set $\mathscr{A}$ of Example \ref{moreBadWine} is not decision-complete.
\end{proposition}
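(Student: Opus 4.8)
The plan is to invoke the contrapositive of Corollary~\ref{minNumA}: to show $\mathscr{A}$ is not decision-complete it suffices to check that the decision set of Example~\ref{moreBadWine} is too small, i.e. that $|\mathscr{A}| < 2^{|\mathcal{X}|-1}-1$.

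First I would pin down $\mathcal{X}$ and $\mathscr{A}$ for this instance. Here the unknown $X$ is the unordered pair of bad bottles among the four, so $\mathcal{X}$ is the set of $2$-element subsets of $\{1,2,3,4\}$ and $|\mathcal{X}| = \binom{4}{2} = 6$; hence the threshold in Corollary~\ref{minNumA} is $2^{6-1}-1 = 31$. A single physical observation consists of mixing some subset $T \subseteq \{1,2,3,4\}$ of the bottles and tasting the mixture, which reveals exactly whether the bad pair meets $T$. Thus every admissible query has the form ``is $X \in S_T$?'' with $S_T := \{\,x \in \mathcal{X} : x \cap T \neq \emptyset\,\}$, and $\mathscr{A} = \{\,S_T : T \subseteq \{1,2,3,4\}\,\} \setminus \{\emptyset,\mathcal{X}\}$.

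Next I would bound $|\mathscr{A}|$. There are only $2^{4}=16$ choices of $T$, hence at most $16-2=14$ nontrivial sets $S_T$; in fact $T=\emptyset$ gives $\emptyset$ while any $T$ with $|T|\ge 3$ gives $\mathcal{X}$, so only $|T|\in\{1,2\}$ contribute and $|\mathscr{A}| = 10$. Either count gives $|\mathscr{A}| \le 14 < 31 = 2^{|\mathcal{X}|-1}-1$, so by Corollary~\ref{minNumA} the set $\mathscr{A}$ cannot be decision-complete, which is exactly the claim.

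There is essentially no hard step: the only point needing care is the modelling claim that tasting a mixture $T$ corresponds precisely to the query set $S_T$ (it detects the bad pair if and only if that pair intersects $T$) and that no richer queries are available; granting this, the result is an immediate cardinality comparison. As a concrete sanity check one may also note directly that neither $\{\{1,2\}\}$ nor its complement in $\mathcal{X}$ lies in $\mathscr{A}$ — no single tasting separates the pair $\{1,2\}$ from all the others — exhibiting an explicit witness to the failure of decision-completeness.
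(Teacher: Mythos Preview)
Your proof is correct and follows essentially the same approach as the paper: both invoke Corollary~\ref{minNumA} by comparing $|\mathscr{A}|$ to the threshold $2^{|\mathcal{X}|-1}-1 = 31$. The paper's version is terser (it just writes $|\mathscr{A}| = 2^4 - 1 < 2^{|\mathcal{X}|-1}-1$ without your sharper count $|\mathscr{A}|=10$ or the explicit witness), but the argument is the same in substance.
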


\begin{proof}
$|\mathcal{X}|=6$, so $|\mathscr{A}| = 2^4 - 1 < 2 ^{ |\mathcal{X}|-1} - 1$.
\end{proof}

\subsection{DNA Detection Problem}
To determine genomic sequences of several organisms, biological meaning needs to be assigned to particular regions of the sequence. One of important steps in this process is the identification of genes. An \textit{Exon} is an interval of the DNA sequence and it does not overlap with other exons and gene is a sequence of exons. In this paper, we simplified the question by assuming that the target gene only contains one exon. We can detect whether the target exon is in an interval in the DNA sequence or not at each detection. The position of the exon on DNA is fixed, so we want to minimize the expected number of detection to determine the position of the exon by choosing the intervals wisely, thus reducing the cost of DNA detection\cite{biedl2004finding}\cite{xu1998gene}.

\begin{definition}
A set $S$ of integers is continuous, if $S = [\min S, \max S] \cap \mathbb{Z}$.
\end{definition}

\begin{example}
$\{1,2,3\}$ and $\{4\}$ are continuous, while $\{1,3\}$ is not.
\end{example}

If we assume that the exon's location we want to find on the DNA is discrete and unique, we can use a random variable $X\in \{1,2,\dots, n\}$ to it, and in this case, the decision set is any continuous set in space $\mathcal{X}$.

\begin{figure}
    \centering
    \includegraphics[width=0.8\textwidth]{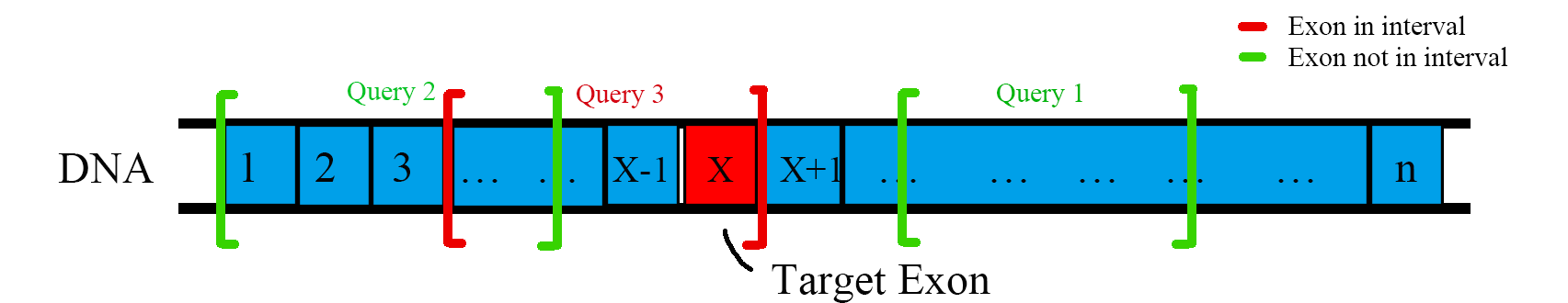}
    \caption{DNA exon detection.}
    \label{fig:exon}
\end{figure}

\subsection{Battleship Problem}
"Battleship" is a popular 2-player strategy and guessing game. In the typical setting, each player places "boats" with different lengths on his $10\times 10$ board, which is hidden to the opponent. Each player take turns to "bomb" a grid $(i,j)$ on opponent's board, and the opponent must honestly report if it "hits" or "misses". The goal of the game is sink all of the opponent's ships, i.e. "hit" all the grids on the opponent's board that represents a ship, before the opponent sinks all the player's ships. 
To simplify analysis and computation, we instead study the 1-player Battleship. In this setting, the game randomly generates a possible layout of ships unknown to the player. The goal of the player is to sink all the ships in the fewest tries (bombs). 
For example, in this particular board placed by the opponent (the game), the player needs to "hit" all grids marked in gray.
\begin{figure}[H]\label{battleshipgame}
    \centering
    \includegraphics[scale=0.65]{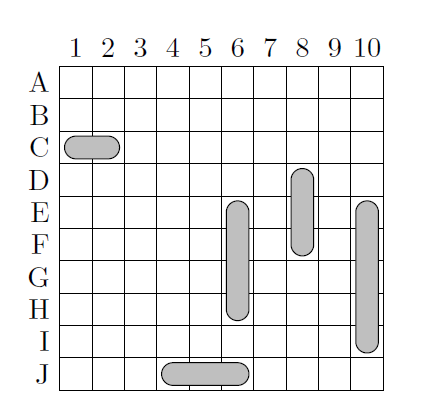}
    \caption{One possible board of one player in Battleship
    \cite{battleshipoptimal}.}
\end{figure}

The 1-player Battleship problem can be formulated into the following. Suppose there are $n$ different possible ship layouts. Let $\mathcal{X}_0 = \{ X_1, X_2, X_3, \dots, X_n \}$ be the set of all possible layouts, where $X_k, 1\leq k \leq n$ is a $10\times 10$, $0-1$ matrix. The game randomly chooses the target board $X^{*} \in \mathcal{X}_0$ (every legal board is equally possible), and the player tries to minimize the number of tries $T$ to guess $T^{*}$.

At $t$-th step, the player gives a query on i, j: $Q_t  = (i_t,j_t)$, which means: "Is bombing of grid $(i,j)$ a hit?" the game answers honestly, giving the player some information to eliminate some layouts in $\mathcal{X}_{t-1}$ and get the new $\mathcal{X}_{t}$.
$$\mathcal{X}_t = \{X|(X\in \mathcal{X}_{t-1}) \wedge (X_{ij} = X^{*}_{ij})\}$$
The goal is to minimize the number of tries $T$ to determine what the target layout is,  by choosing queries wisely. That is, 

$$    \min_{Q_1, Q_2, \dots, Q_T} T  , Q_t = (i_t, j_t)$$
$$    \mathcal{X}_t = \{X|(X\in \mathcal{X}_{t-1}) \wedge (X_{i_t j_t} = X^{*}_{i_t j_t})\} , 1 \leq t \leq T$$
$$\mathcal{X}_0 = \{ X_1, X_2, X_3, \dots, X_n \},	\mathcal{X}_T = \{X^{*}\}, \text{ random variable } X^* \in \mathcal{X}_0 $$

More precisely, we would like to minimize the average number of tries $\bar{T}$ over all possible targets $X^{*}$.

Previous studies use diagonal searching\cite{Rodin}, tree search\cite{battleshipoptimal}, or RL-based\cite{battleshiprl} approaches. In our study, we adopt the idea of greedy information gain per step, which will be demonstrated in Chapter \ref{sec:gbsc}.

 % Adds your literature review
\section{Greedy Algorithm Based on Huffman coding}

\subsection{Description}

% \begin{algorithm}[H]
% 	\caption{Brute force algorithm} 
% 	\begin{algorithmic}[1]
%     \State initialize empty array $trees$
% 	\Procedure{BF}{nodes}
% 	    \If{$nodes$ contain only one node}
% 	        \State append $nodes$ to $trees$
% 	        \State \Return
% 	    \EndIf
% 	    \State $pairs \gets \{(nodes[i], nodes[j]): i < j\}$
%         \For{every $(i,j) \in pairs$}
%         		\If{$i$ and $j$ can merge}
%         		    \State $newNodes \gets$ merge $i$ and $j$
%         		    \State \Call{BF}{$newNodes$}
%                 \EndIf
%         \EndFor
%     \EndProcedure
% 	\State read $p(x)$ and initialize corresponding $nodes$
% 	\State \Call{BF}{$nodes$}
% 	\State find the tree with minimum expected length in $trees$
% 	\end{algorithmic} 
	
% \end{algorithm}

The most straightforward algorithm is search by brute force. This algorithm is promised to find the precise result but is too slow for large-scale problems. Therefore, we try to use the same method of Huffman coding. In every iteration we still try to merge the two nodes with minimum sum of probabilities, if possible. This gives us the prototype of a greedy algorithm based on Huffman coding.

\begin{algorithm}
	\caption{Greedy algorithm based on Huffman coding} 
	\begin{algorithmic}[1]
	\Procedure{GreedyHuffman}{nodes}
	    \If{$nodes$ contains only one node}
	        \State \Return $nodes$
	    \EndIf
	    \State $pairs \gets \{(nodes[i], nodes[j]): i < j\}$
	    \State sort $pairs$ by the sum of probabilities in ascending order
        \For{every $(i,j) \in pairs$}
        		\If{$i$ and $j$ can merge}
        		    \State $newNodes \gets$ merge $i$ and $j$
        		    \State $tree \gets$ \Call{GreedyHuffman}{$newNodes$}
        		    \If{$tree \ne \emptyset$}
        		        \State \Return $tree$
        		    \EndIf
                \EndIf
        \EndFor
        \State \Return $\emptyset$
    \EndProcedure
	\State read $p(x)$ and initialize corresponding $nodes$
	\State $tree \gets$ \Call{GreedyHuffman}{$nodes$}
	\end{algorithmic} 
	
\end{algorithm}

The critical part of the algorithm is how we judge whether two nodes can be merged (line 8). The following proposition describes this process mathematically, but it is not practical in practice. The algorithm of the process needs to be designed specifically to get a better performance.

\begin{definition}
The set of possible values of $X$ at a certain node of a decision tree is called the candidates of the node.
\end{definition}

\begin{proposition}
Assume that $A$ and $B$ are the candidates of two nodes. Then the two nodes can be merged if and only if $\exists C \in \mathscr{A}$, $(A \cup B)\setminus C \in \{A, B\}$.
\end{proposition}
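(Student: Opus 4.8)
The plan is to unfold the definition of "merging two nodes" in a decision tree and match it against the set-theoretic condition. First I would recall what merging means: if the two nodes have candidate sets $A$ and $B$, then in the merged decision tree the parent node has candidate set $A \cup B$ (we are in the setting where the target value lies in exactly one of the two children), and the single query asked at that parent node, "is $X \in C$?" for some $C \in \mathscr{A}$, must route every element of $A \cup B$ into one child or the other so that one child receives exactly $A$ and the other receives exactly $B$. In other words, the query $C$ must split $A \cup B$ as $\{(A\cup B)\cap C,\ (A\cup B)\setminus C\} = \{A, B\}$ (as an unordered pair). Since $(A\cup B)\cap C = (A\cup B)\setminus((A\cup B)\setminus C)$, the pair $\{(A\cup B)\cap C, (A\cup B)\setminus C\}$ equals $\{A,B\}$ precisely when $(A\cup B)\setminus C \in \{A,B\}$ — this is the key bookkeeping observation, using that $A$ and $B$ are disjoint and their union is $A\cup B$, so knowing one block of the partition determines the other.

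For the forward direction, I would assume the two nodes can be merged, take the query set $C \in \mathscr{A}$ used at the merged parent, and observe that honest answering partitions $A \cup B$ into $(A\cup B)\cap C$ and $(A\cup B)\setminus C$; feasibility of the merged tree forces these two blocks to be exactly the candidate sets $A$ and $B$ of the two children (possibly swapped), hence $(A\cup B)\setminus C \in \{A, B\}$. For the converse, given $C \in \mathscr{A}$ with $(A\cup B)\setminus C \in \{A,B\}$, I would explicitly build the merged node: ask "is $X \in C$?" at the parent; one answer isolates $(A\cup B)\setminus C$ and the other isolates its complement within $A\cup B$, which is $(A\cup B)\cap C$; since $\{(A\cup B)\setminus C,\ (A\cup B)\cap C\} = \{A,B\}$, each branch leads to a node with exactly the required candidate set, and we attach the two existing subtrees accordingly. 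This produces a valid decision tree, so the merge is legal.

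I do not expect a serious obstacle here; the statement is essentially a definitional repackaging. The one point that deserves care is making precise what "the two nodes can be merged" means as an operation on decision trees — in particular that the candidates of the new parent are exactly $A \cup B$ and that each child keeps its own subtree unchanged — and checking that the element $X^\ast$ really is separated correctly (i.e., that no element of $A\cup B$ is ambiguous after the single query), which is exactly the disjointness of the two blocks of the partition induced by $C$. A secondary subtlety is the degenerate case where one of $A$, $B$ could in principle be empty or where $C$ could equal $\emptyset$ or $\mathcal{X}$; I would note that candidate sets of genuine nodes are nonempty and that such trivial $C$ simply never satisfies the condition (or can be excluded without loss of generality), so the equivalence is unaffected.
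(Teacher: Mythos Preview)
Your proposal is correct, and in fact it gives more than the paper does: the paper states this proposition without proof, treating it as immediate from what ``merging'' means in the greedy Huffman-based algorithm. Your argument is exactly the definitional unpacking one would write to justify it, and your observation that the single condition $(A\cup B)\setminus C \in \{A,B\}$ recovers the full unordered partition $\{A,B\}$ (because $A$ and $B$ are disjoint and jointly exhaust $A\cup B$) is the only nontrivial bookkeeping step.

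One small point worth tightening: you rely on $A\cap B=\emptyset$, which is true in context but is not part of the statement as written. In the greedy algorithm the current forest of nodes always has pairwise disjoint candidate sets (initially singletons, and every merge preserves this), so the assumption is justified; it would be cleanest to state it explicitly at the outset rather than invoke it mid-proof.
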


\subsection{First Try to Solving the DNA Detection Problem}

The following proposition instructs us how to judge whether two nodes can be merged for this problem.

\begin{proposition}
Assume that $A$ and $B$ are the candidates of two nodes. Then the two nodes can be merged if and only if any of the two conditions holds:
    \begin{enumerate}[label=(\arabic*)]
        \item $A$ or $B$ is continuous.
        \item $\min A > \max B$ or $\min B > \max A$.
    \end{enumerate}
\end{proposition}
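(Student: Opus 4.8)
The plan is to lean on the general merge criterion established just above --- two nodes merge iff some $C\in\mathscr{A}$ has $(A\cup B)\setminus C\in\{A,B\}$ --- and to cash it in using the rigid shape of $\mathscr{A}$ for the DNA problem, where $C$ ranges exactly over the (nontrivial) integer intervals inside $\mathcal{X}$. First I would note the standing invariant that the candidate sets of distinct nodes are disjoint (true for the initial singletons and preserved under merging), so that $(A\cup B)\setminus C=A$ means precisely $B\subseteq C$ and $A\cap C=\emptyset$, and symmetrically for the other alternative. Hence the two nodes can be merged if and only if there is an integer interval $I$ with $A\subseteq I$ and $B\cap I=\emptyset$, or with $B\subseteq I$ and $A\cap I=\emptyset$.

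The workhorse is a one-line fact: every interval containing a finite set $S$ of integers contains $[\min S,\max S]$, so an interval enclosing $S$ and avoiding a set $T$ exists iff $T\cap[\min S,\max S]=\emptyset$ --- for the nontrivial direction take $I=[\min S,\max S]$. Applying this with $(S,T)=(A,B)$ and with $(S,T)=(B,A)$, mergeability becomes the combinatorial condition $B\cap[\min A,\max A]=\emptyset$ or $A\cap[\min B,\max B]=\emptyset$, and it remains to match this with (1)$\,\vee\,$(2). The direction ($\Leftarrow$) is then a quick case check: if $A$ is continuous then $[\min A,\max A]=A$ is disjoint from $B$; if $B$ is continuous, the symmetric statement holds; if $\min A>\max B$ then all of $B$ lies strictly below $[\min A,\max A]$; and $\min B>\max A$ is symmetric. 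So in each case the combinatorial condition holds and the nodes merge.

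The converse is where the work is. Assume, say, $B\cap[\min A,\max A]=\emptyset$, so every element of $B$ is either below $\min A$ or above $\max A$. If they are all below, then $\max B<\min A$, which is (2); if all above, then $\min B>\max A$, again (2). The one genuinely delicate case is when $B$ straddles $[\min A,\max A]$ --- has elements on both sides --- and here I would aim to derive (1) by showing $[\min A,\max A]\subseteq A$, i.e.\ that $A$ is continuous. Pinning down exactly why the straddle case forces this (a plain interval bound such as $A\subseteq[\min B,\max B]$ is not by itself enough, so some further structure of the candidate sets arising in the decision tree must be brought in) is the step I expect to be the main obstacle; once it is settled, matching the combinatorial condition with (1)$\,\vee\,$(2) in both directions is routine bookkeeping.
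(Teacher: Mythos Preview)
The paper states this proposition without proof, so there is no argument to compare yours against. Your reduction of mergeability to the combinatorial condition ``$B\cap[\min A,\max A]=\emptyset$ or $A\cap[\min B,\max B]=\emptyset$'' is clean and correct, and your proof of the direction $(\Leftarrow)$ goes through.

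Your hesitation about the straddling case in the converse is well founded --- but the difficulty is not that the argument is delicate, it is that the ``only if'' direction is actually false. Take $\mathcal{X}=\{1,2,3,4,5\}$ with $A=\{2,4\}$ and $B=\{1,5\}$; both arise as candidate sets after merging singletons (singletons are continuous, so those earlier merges satisfy~(1)). With $C=[2,4]\in\mathscr{A}$ we get $(A\cup B)\setminus C=\{1,5\}=B$, so the general merge criterion holds; yet neither $A$ nor $B$ is continuous and neither inequality in~(2) is satisfied. No further structural invariant on candidate sets rescues the claim, since the algorithm's own merging rules produce this very pair $A,B$. The proposition should therefore be read as a sufficient, easily testable condition that the greedy heuristic employs, not a genuine characterization; your sufficiency argument is the sound half, and the other half cannot be completed as stated.
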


Using these two conditions, we implemented our first algorithm that deals with the DNA detection problem. We carried out a experiment by generating \num{10000} random $p(x)$ with $n=6$ and compare the result of the brute force algorithm and the Huffman-based algorithm. The results of the experiment indicate that the result of this algorithm is rather close to the optimal value. The result of each data is represented by a point in Figure \ref{fig:greedy_huffman_1}. The red line in Figure \ref{fig:greedy_huffman_1} is the optimal bound, because the brute force algorithm is promised to find the precise optimal value. We can see that most of the points are close to the red line, and many of them exactly lie on the red line, which means they reach the optimal bound. In fact, around $60\%$ of the data reach the optimal bound.

This intuition is further confirmed by Figure \ref{fig:greedy_huffman_2}. Let $L_b$ and $L_g$ be the expected length of the brute force algorithm and the Huffman-based algorithm, respectively. We define that $\text{gap} = \frac{L_g-L_b}{L_b}$. Although the maximum of gap is around $35\%$, in most cases, the gap is less than $10\%$. Therefore, the result of the Huffman-based algorithm is very satisfactory.

\begin{figure}[H]
    
    \centering
    
    \begin{subfigure}{0.45\textwidth}
        \includegraphics[width=1.0\textwidth]{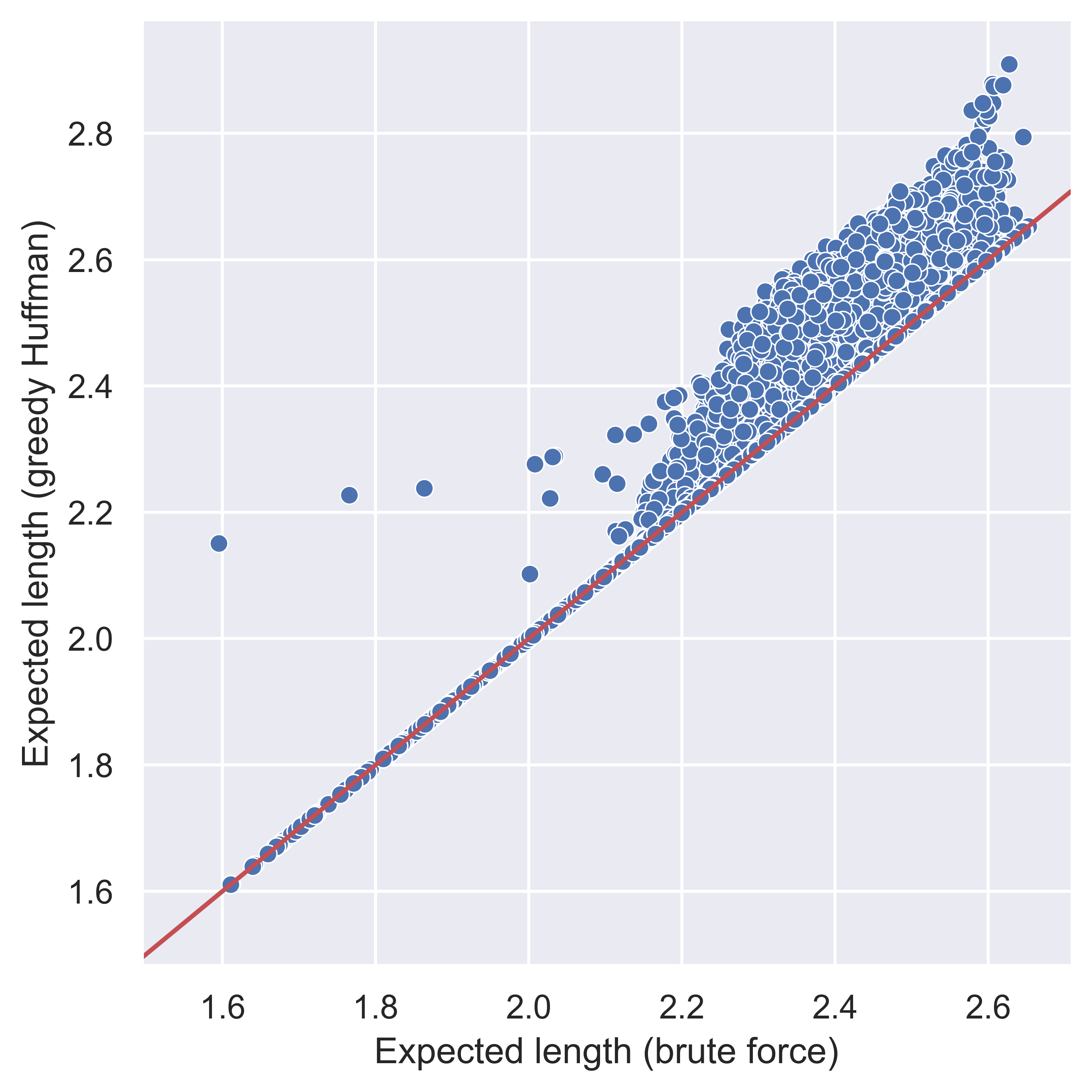}
        \caption{Expected lengths of the brute force algorithm and the Huffman-based algorithm.}
    \label{fig:greedy_huffman_1}
    \end{subfigure}
    \begin{subfigure}{0.45\textwidth}
    \includegraphics[width=1.2\textwidth]{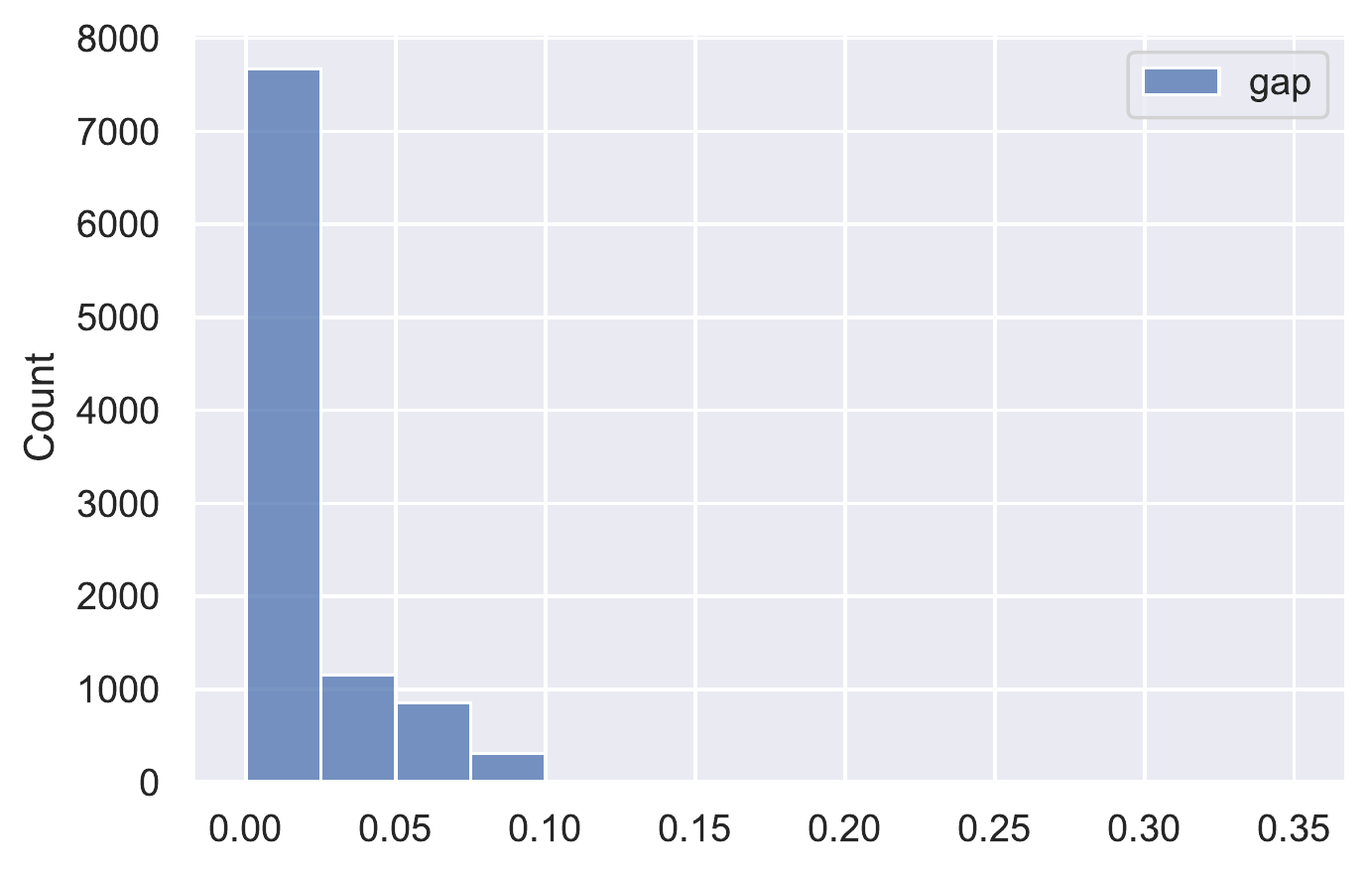}
    \caption{Distribution of the gap between the optimal value and the output of the Huffman-based algorithm.}
    \label{fig:greedy_huffman_2}
      \end{subfigure}
    \caption{Comparison between Huffman-based algorithm and the brute force optimal}
\end{figure}
%---------------------------------------------------------------------------------
% Theory
%---------------------------------------------------------------------------------

\section{Greedy Binary Separation Coding}
\label{sec:gbsc}

\subsection{Motivation}

The greedy algorithm based on Huffman coding has a vital drawback: when $|\mathcal{X}|$ is large and $|\mathscr{A}|$ is relatively small, the algorithm becomes very slow. In most cases, $|\mathscr{A}|$ is small compared to $|\mathcal{X}|$. Therefore, instead of building the decision tree from bottom to top, we consider another way of building the tree from top to bottom, by choosing the best question. We call this coding Greedy Binary Separation Coding (GBSC).

\subsection{Definition}

We now formally describe GBSC.

\begin{definition}
$\{A, B\}$ is a (binary) partition of $S$, if $A \cup B = S$ and $A \cap B = \emptyset$.
\end{definition}

\begin{definition}
A partition $\{A,B\}$ of $S$ is optimal, if for any partition $\{C,D\}$ of $S$, $|p(A) - p(B)| \le |p(C) - p(D)|$, where $p(X)$ is the sum of all the numbers in $X$.
\end{definition}

The idea of GBSC is simple. We construct the decision tree from top to bottom. Let $S=\{p(x): x \in \mathcal{X} \}$. At root, we choose any optimal partition of $S$, and split the tree according to the partition. The process is repeated recursively at each node, and each time we try to find the best partition of the candidates of the node.

\begin{example}
\label{gbscExample}
If $X\in\{1,2,3,4\}$ and $(p_1,p_2,p_3,p_4)=(0.1,0.2,0.3,0.4)$, then the decision tree of $X$ using GBSC is as follows.

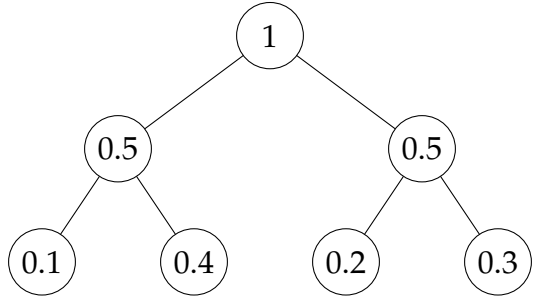
\begin{figure}[H]
    \centering
    \begin{tikzpicture}[level distance=1.5cm,
      level 1/.style={sibling distance=4cm},
      level 2/.style={sibling distance=2cm}]
      \node [treenode, draw] {$1$}
        child
        {
            node [treenode, draw] {$0.5$}
            child 
            {
                node [treenode, draw] {$0.1$}
            }
            child 
            {
                node [treenode, draw]{$0.4$}
            }
        }
        child
        {
            node [treenode, draw] {$0.5$}
            child 
            {
                node [treenode, draw]{$0.2$}
            }
            child 
            {
                node [treenode, draw]{$0.3$}
            }
        };
    \end{tikzpicture}
    \caption{Decision tree of Example \ref{gbscExample}.}
    % \label{fig:my_label}
\end{figure}

\end{example}

\subsection{Intuition}
The intuition of GBSC is simple: greedily maximize the information gain (mutual information) at each query. In binary decision trees, each node corresponds to a possible random variable sampled from some distribution. For a query $Q$ on node $X\in \mathcal{X} = \{X_1,X_2,\dots, X_n\}$, each $X_i$ gives a binary answer.
The query is designed such that 
$$
p = \frac{| \{X_i|X_i\in \mathcal{X}  \wedge \text{ query on } X_i \text{ gives result 1} \}|} {n}
$$
so $\Pr(Q=1) = p$. 
The information gain of a query $Q$ with respect to node $X$ is 
$$
\begin{aligned}
I(X;Q) &= H(X) - H(X|Q) \\
&= H(X) - \Pr(Q=0) H(X|Q=0) - \Pr(Q=1)H(X|Q=1)
\end{aligned}
$$
Suppose $X$ follows a uniform distribution, then $X$ is also conditionally uniformly distributed given $Q$. Then 
$$
\begin{aligned}
I(X;Q)&=\log n- (1-p)\log n(1-p) - p\log np \\
&= -(1-p)\log (1-p) - p\log p \\
&= H(p)
\end{aligned}
$$

Therefore, $p=\frac{1}{2}$ maximizes the information gain at each node. For more general distributions, we will prove its optimality by the analysis below.

\subsection{Analysis}

Recall that Huffman coding is the best we can do. We hope that GBSC can be as good as Huffman coding. However, sometimes GBSC cannot reach the optimal bound. For $X$ in Example \ref{gbscExample}, the expected code length is $1.9$ for Huffman coding and $2$ for GBSC. Although GBSC is not the best coding, from experiments we observe that most of the time it is quite good, so we believe that the average code length of GBSC will not be too far away from $H(X)$. This inspired us to prove the following theorem (see \textbf{Proposition \ref{GBSCProof}} for proof).

\begin{theorem}
\label{gbscShannon}
Assume that the expected code length is $L_g$ for GBSC and $L_s$ for Shannon coding. Then $L_g \le L_s$, i.e., GBSC is at least as good as Shannon coding.
\end{theorem}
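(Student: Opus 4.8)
The plan is to read the GBSC codeword length of a symbol $x$ as the depth $\ell_g(x)$ of the leaf labelled $x$ in the GBSC decision tree, and the Shannon length as $\ell_s(x)=\lceil\log_2(1/p(x))\rceil$, so that $L_g=\sum_{x\in\mathcal X}p(x)\,\ell_g(x)$ and $L_s=\sum_{x\in\mathcal X}p(x)\,\ell_s(x)$, and then to argue by structural induction on the GBSC tree. The observation that makes such an induction possible is that, since the optimality of a partition depends only on relative weights, the GBSC subtree hanging below a node with candidate set $S$ and mass $q=p(S)$ is \emph{exactly} the GBSC tree of the conditional distribution $p(\cdot)/q$ on $S$. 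I would therefore prove the localized statement: for every such node,
\[
\sum_{x\in S}p(x)\,\ell^{S}(x)\ \le\ \sum_{x\in S}p(x)\lceil\log_2(q/p(x))\rceil ,
\]
where $\ell^{S}(x)$ is the depth of $x$ below that node; taking $S=\mathcal X$ and $q=1$ recovers the theorem. The base case $|S|=1$ is trivial, as both sides vanish.

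For the induction step, let $\{A,B\}$ be the optimal partition picked at $S$, with $q_A=p(A)$, $q_B=p(B)$, $q_A+q_B=q$. The elementary recursion $\sum_{x\in S}p(x)\ell^{S}(x)=q+\sum_{x\in A}p(x)\ell^{A}(x)+\sum_{x\in B}p(x)\ell^{B}(x)$ (every codeword below $S$ gains one bit, and the mass $q$ is traversed once), combined with the inductive hypothesis for the children $A$ and $B$, reduces everything to the purely numerical ``key inequality''
\[
q+\sum_{x\in A}p(x)\lceil\log_2(q_A/p(x))\rceil+\sum_{x\in B}p(x)\lceil\log_2(q_B/p(x))\rceil\ \le\ \sum_{x\in S}p(x)\lceil\log_2(q/p(x))\rceil .
\]
Here I would use the estimate $\lceil b\rceil-\lceil a\rceil\ge\lfloor b-a\rfloor$ valid for reals $a\le b$: replacing $q_A$ by $q$ inflates every ceiling on the $A$-side by at least $\lfloor\log_2(q/q_A)\rfloor$, and similarly on the $B$-side, while $q_A/q+q_B/q=1$ forces the lighter of the two sides to have $q/q_{\min}\ge 2$, hence to gain a full bit over its entire mass. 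This already pays for the mass of the lighter side against the extra ``$+q$''.

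The remaining $q_{\max}$ — the mass of the heavier side, where the crude ceiling bound yields nothing — is the real obstacle, and it is precisely where one must use that a GBSC split is the \emph{most balanced} one the weights allow, not merely some legal one. The competing partition $\{\{x\},S\setminus\{x\}\}$ shows that whenever the optimal split does not isolate $x$ one has $p(A)\le p(S)-p(x)$; hence a grossly unbalanced split can occur only when the heavy side is carried by one or a few large atoms, and such atoms have small $\lceil\log_2(q/p(x))\rceil$, which is exactly what is needed to recover the deficit. I expect the bulk of the work to be turning this qualitative trade-off into the inequality above; an equivalent and possibly cleaner route is to prove the pointwise bound $\ell_g(x)\le\lceil\log_2(1/p(x))\rceil$ directly, using that each non-isolating descent both halves the residual mass $p(S_i)-p(x)$ (since $p(S_{i+1})\le\tfrac12(p(S_i)+p(x))$) and decreases it by at least $p(x)$, so that $x$ reaches a leaf within $\lceil\log_2(1/p(x))\rceil$ steps — the single delicate regime being $p(S_i)-p(x)\in(p(x),2p(x))$, where the two simple estimates each fall one level short and the optimality of the partition has to be brought in to close the last bit.
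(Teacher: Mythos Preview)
You correctly isolate the target: the whole theorem reduces to the pointwise bound $\ell_g(x)\le\lceil\log_2(1/p(x))\rceil$, and this is exactly what the paper proves as its Proposition~5. Where your proposal parts ways with the paper is in \emph{how} that bound is obtained, and there both of your routes have genuine gaps.

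Your ``key inequality'' route is left open by your own admission: the heavy side of an unbalanced split contributes nothing via $\lceil b\rceil-\lceil a\rceil\ge\lfloor b-a\rfloor$, and the qualitative remark that large atoms have small Shannon lengths is not turned into an inequality. For the descending route, the halving estimate $p(S_{i+1})\le\tfrac12\bigl(p(S_i)+p(x)\bigr)$ is correct and follows from moving $x$ across the cut, but the companion claim that each non-isolating step ``decreases the residual by at least $p(x)$'' is false: when $x$ sits on the heavier side, the sibling mass $p(T_i)=r_i-r_{i+1}$ can be strictly below $p(x)$ (take $S_i=\{x,y,z\}$ with $p(x)=0.2$, $p(y)=p(z)=0.3$; the optimal split $\{x,y\}$ vs $\{z\}$ gives $p(T_i)=0.3$ but $r_i-r_{i+1}=0.3$ while for other configurations one easily gets $p(T_i)<p(x)$). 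With halving alone you only obtain $r_i\le 2^{-i}(1-p)$, and combining this with ``$r_i<p(x)$ forces isolation at the next step'' yields $\ell_g(x)\le\lceil\log_2(1/p)\rceil+1$, one level too many. The ``delicate regime'' you flag is therefore not a corner case but the whole last bit of the bound, and nothing in your sketch closes it.

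The paper's argument avoids this by working in the opposite direction and moving a different object. Assuming $\ell_g(x)=k+1$ with $p(x)\ge 2^{-k}$, it looks at the \emph{sibling subtree} of $x$ at the leaf level (call its mass $s$) and observes that this entire subtree can be transplanted to the sibling at any ancestor level; optimality of each split then forces the imbalance at every level $m$ to be at most $s$, hence the sibling mass at level $m$ is at least $p(S_m)-s$. A short upward induction turns this into $p(T_m)\ge 2^{-m}$ for every $m\le k$, and summing the sibling masses along the root-to-$x$ path gives total probability $\ge 1+s>1$, a contradiction. The crucial difference from your approach is that the transplanted piece is the leaf-level sibling subtree rather than the single atom $x$; this is what lets the paper control \emph{all} levels uniformly and recover the missing bit.
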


Recall that $L_s$ also satisfies $L_s < H(X)+1$, so we immediately obtain a good upper bound for $L_g$.

\begin{corollary}
Assume that the expected code length is $L_g$ for GBSC. Then $L_g < H(X) + 1$.
\end{corollary}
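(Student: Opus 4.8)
The corollary reduces at once to Theorem~\ref{gbscShannon}, since Shannon coding always satisfies $L_s < H(X)+1$ (because $\lceil \log_2(1/p(x))\rceil < \log_2(1/p(x))+1$ for every $x$), so that $L_g \le L_s < H(X)+1$. The real content therefore lies in Theorem~\ref{gbscShannon}, and the plan is to prove the sharper pointwise statement: for every symbol $x$, the GBSC codeword length $\ell_g(x)$ (the depth of the leaf $\{x\}$ in the GBSC tree) satisfies $\ell_g(x) \le \lceil \log_2(1/p(x))\rceil$, which is exactly the Shannon length of $x$. Averaging this inequality against $p(x)$ yields $L_g \le L_s$ immediately.

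To prove the pointwise bound, fix $x$ and trace the root-to-leaf path $\mathcal{X}=S_0 \supsetneq S_1 \supsetneq \cdots \supsetneq S_d = \{x\}$, where $d=\ell_g(x)$ and at each node $S_{i-1}$ the algorithm splits along an optimal partition $\{S_i,T_i\}$ with $x\in S_i$. The structural ingredient is the lemma: if $\{A,B\}$ is an optimal partition of a set $S$ with $p(A)\ge p(B)$, then every element of $A$ has probability at least the gap $p(A)-p(B)$ — otherwise moving such an element from $A$ to $B$ would strictly decrease $|p(A)-p(B)|$, a short case check on $|(p(A)-p(B))-2p(a)|$. Writing $\mu_i := \min_{y\in S_i} p(y)$, the lemma gives $p(S_i) \le \frac{1}{2}(p(S_{i-1})+\mu_i)$ whenever $S_i$ is the heavier side of its split, and the same bound holds trivially when $S_i$ is the lighter side, since then $p(S_i)\le \frac12 p(S_{i-1})$. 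Thus this single inequality holds at every step of the path.

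Now set $a_i := p(S_i)-\mu_i \ge 0$. Because the sets $S_i$ are nested we have $\mu_0\le\mu_1\le\cdots$, so $a_i \le \frac12(p(S_{i-1})-\mu_i)\le \frac12(p(S_{i-1})-\mu_{i-1})=\frac12 a_{i-1}$, and hence $a_{d-1}\le a_0/2^{d-1}$. Since all probabilities are positive, $a_0 = 1-\mu_0 < 1$, so in fact $a_{d-1} < 2^{1-d}$. On the other hand $x\in S_{d-1}$ and $S_{d-1}\setminus\{x\}\neq\emptyset$, so $p(S_{d-1}\setminus\{x\})\ge \mu_{d-1}$ and thus $p(S_{d-1}) \ge p(x)+\mu_{d-1}$, i.e. $p(x)\le a_{d-1}$. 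Combining, $p(x) < 2^{1-d}$, which rearranges to $d < \log_2(1/p(x))+1$; as $d$ is an integer this is exactly $d \le \lceil \log_2(1/p(x))\rceil$. Summing over $x$ finishes the theorem, and with it the corollary.

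I expect the crux to be recognizing why a naive ``$p(S_i)\le \frac12 p(S_{i-1})$'' argument fails — the heavier part of an optimal partition can genuinely carry more than half the probability mass, as in $\{1/3,1/3\}$ versus $\{1/3\}$ — and then repairing it with the two devices above: the structural lemma bounding each element of the heavy side below by the gap, and the choice of the potential $p(S_i)-\mu_i$ (rather than $p(S_i)$ itself) as the quantity that halves along the path. Verifying the lemma and checking that the telescoped inequality is not spoiled by the monotonicity of $\mu_i$ are the only delicate points, and both are brief. Zero-probability symbols, if any are present, should be dropped at the outset, since their Shannon length is not finite and the strictness $a_0<1$ above relies on $\mu_0>0$.
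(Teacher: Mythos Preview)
Your argument is correct and rests on the same structural observation the paper uses—namely that in an optimal partition, the gap between the two sides cannot exceed the weight of any element that could be moved across, since otherwise moving it would strictly shrink the gap. The paper invokes this same ``moving'' argument, but packages the proof differently: it argues by contradiction, assuming some leaf at depth $k+1$ has probability at least $2^{-k}$, and then runs a downward induction from level $k$ to the root, showing at each level $m$ that the sibling subtree carries mass at least $2^{-m}$ and that the on-path subtree carries mass at least $2^{-(m-1)}$ plus a fixed surplus, until the root mass exceeds $1$.

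Your route is more direct and, to my eye, cleaner: rather than tracking two interlocking quantities through an induction, you isolate a single potential $a_i = p(S_i) - \mu_i$ and show it halves at every step. This makes the mechanism transparent—it is precisely the minimum element weight $\mu_i$ that soaks up the excess when $S_i$ is the heavy side—and avoids the contradiction framing entirely. The paper's version has the minor advantage of not needing to separately dispose of zero-probability symbols (the contradiction simply never fires for them), but your one-line caveat handles this adequately.
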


This gives us confidence that GBSC can be applied to obtain a quite good approximation of the optimal coding. Also, it shows that if we use the same technique mentioned in Cover's textbook (see page 114), we can use GBSC to approach the Shannon bound.

Now we focus on proving Theorem \ref{gbscShannon}. Recall that by definition, the code length for a symbol with probability $p$ in Shannon coding is $\lceil -\log p \rceil$, so we only need to prove this proposition.

\begin{proposition} \label{GBSCProof}
\label{prop1}
Assume that the code length for a symbol with probability $p$ is $L$ for GBSC. Then $L \le n$ if $p \ge 2^{-n}(n \in \mathbb{Z^+})$ , or equivalently, $L \le \lceil -\log p \rceil$.
\end{proposition}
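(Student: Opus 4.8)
The plan is to derive the statement from a weight‑free strengthening proved by strong induction on $|\mathcal{X}|$. It is enough to establish the \emph{strict} inequality $p_x\,2^{L(x)-1}<p(\mathcal{X})$ for every symbol $x$, where $L(x)$ is its GBSC code length and $p(\mathcal{X})=\sum_x p_x$; taking $p(\mathcal{X})=1$ this reads $2^{L-1}<1/p$, i.e.\ $L-1<-\log p$, and since $L-1$ is an integer this is equivalent to $L\le\lceil-\log p\rceil$ (hence $L\le n$ whenever $p\ge 2^{-n}$). Strictness is essential: a non‑strict bound would not rule out $L=\lceil-\log p\rceil+1$. We may assume every $p_x>0$, since removing zero‑probability symbols only shortens the other codewords, and also $|\mathcal{X}|\ge 2$ (the case $|\mathcal{X}|=1$ giving $L=0$).

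The engine of the proof is the following lemma, which I would prove by strong induction on $|S|$: \emph{if GBSC is applied to a finite multiset $S$ of positive reals with $|S|\ge 2$, total $Q=p(S)$, and every element of $S$ is at least $\delta$ for some $\delta\ge 0$, then every leaf $x$ satisfies $p_x\,2^{L(x)-1}\le Q-\delta$.} The proposition is then immediate: apply the lemma to $S=\mathcal{X}$ with $\delta=\min_x p_x>0$ to get $p_x\,2^{L(x)-1}\le 1-\delta<1$.

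To carry out the induction, let $\{A,B\}$ be the optimal partition picked at the root, say $x\in A$, and set $P_A=p(A)$, $P_B=p(B)$, so $P_A+P_B=Q$. Both parts are nonempty --- $\{S,\emptyset\}$ has gap $Q$, which for $|S|\ge2$ is beaten by isolating the largest element --- so $|A|<|S|$. If $|A|=1$, then $A=\{x\}$, $L(x)=1$, and $p_x=P_A\le Q-\delta$ since $P_B\ge\delta$. If $|A|\ge2$, the structural fact to exploit is the exchange property of an optimal partition: when $P_A>P_B$, replacing $\{A,B\}$ by $\{A\setminus\{a\},B\cup\{a\}\}$ for any $a\in A$ yields gap $|(P_A-P_B)-2p_a|$, which by optimality is $\ge P_A-P_B$, and since $p_a>0$ this forces $p_a\ge P_A-P_B$. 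Thus every element of $A$ is at least $\delta^{*}:=\max(\delta,\,P_A-P_B)\ge 0$, and the induction hypothesis applies to $A$ (strictly smaller, total $P_A$, all elements $\ge\delta^{*}$), giving $p_x\,2^{L(x)-2}\le P_A-\delta^{*}$ --- here $x$ sits at depth $L(x)-1$ in the subtree on $A$ --- hence $p_x\,2^{L(x)-1}\le 2P_A-2\delta^{*}$. It remains to check $2P_A-2\delta^{*}\le Q-\delta$, a short case split: if $\delta^{*}=P_A-P_B$ this reduces to $2P_B\le Q-\delta$, i.e.\ $P_A-P_B\ge\delta$, the condition defining the case; if $\delta^{*}=\delta$ it reduces to $2P_A\le Q+\delta$, i.e.\ $\delta\ge P_A-P_B$, again the case condition.

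The real obstacle is exactly the subcase $|A|\ge2$ with $P_A>P_B$, because the mass steered toward $x$ need not halve at the root --- for instance $p=(0.3,0.3,0.3,0.05,0.05)$ has optimal root split $0.6$ versus $0.4$ --- so the naive ``the total halves at every level'' induction is false. The remedy is to feed the \emph{sharpened} lower bound $\delta^{*}=\max(\delta,\,P_A-P_B)$ into the recursion: the exchange property guarantees that a strictly heavier optimal part consists only of sizeable elements, and the slack this creates is precisely what repays, one level down, the deficit incurred at the current level. I expect the only points demanding care in the write‑up to be a clean formulation and proof of the exchange property for multiset partitions, and the bookkeeping of the depth shift $L(x)\mapsto L(x)-1$ when descending into $A$.
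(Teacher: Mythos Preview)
Your proof is correct and follows a genuinely different route from the paper's. The paper argues by contradiction: assuming a leaf at depth $k+1$ has probability $\ge 2^{-k}$, it runs an induction \emph{along the root--to--leaf path}, and at each level $m$ invokes optimality of that level's split by moving the \emph{entire fixed sibling subtree} $k{+}1(1)$ across the cut to obtain $p_{k+1(1)} \ge p_{m(1)} - p_{m(2)}$; chaining these inequalities forces the root mass above $1$. You instead prove a direct, strengthened lemma by strong induction on $|S|$, loading the hypothesis with a parameter $\delta$ bounding all weights from below; your exchange argument moves a \emph{single element} and yields the pointwise bound $p_a \ge P_A - P_B$ for every $a$ in the heavier side, which is precisely what lets you feed the sharpened $\delta^{*}=\max(\delta,\,P_A-P_B)$ into the recursion. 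The two case checks $2P_A - 2\delta^{*} \le Q - \delta$ are exactly right, and your observation that strictness (via $\delta=\min_x p_x>0$) is needed to handle the case where $-\log p$ is an integer is a point the paper leaves implicit. Your approach buys a clean, reusable quantitative lemma $p_x\,2^{L(x)-1}\le Q-\delta$ and avoids contradiction; the paper's approach is more concrete (one fixed path, one fixed subtree shuttled around), but both rest on the same optimal-partition exchange idea.
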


\begin{proof}

% We prove this by induction on $n$. In the following figures, a circle represents a single node and a square represents a leaf or a subtree. The numbers in the nodes are indices of the nodes, and $p_i$ is the sum of probabilities of all the children nodes of node $i$.

% \textbf{Basic step}. If $n=1$ and $L>1$, as shown in Figure \ref{fig:gbscProof1}, WLOG, assume that $p_4 \ge 0.5$. Then $p_2 > p_3$. If $p_5 < p_2 - p_3$, then moving node $5$ to node $3$ will make $|p_2 - p_3|$ smaller, so $p_5 \ge p_2 - p_3$. Therefore, $p_1 = p_3 + p_4 + p_5 \ge 2p_4 + p_5 > 1$, which is a contradiction. 

% \begin{figure}[H]
%     \centering
%     \begin{tikzpicture}[level distance=1.5cm,
%       level 1/.style={sibling distance=2cm},
%       level 2/.style={sibling distance=2cm}]
%       \node [treenode, draw] {$1$}
%         child {node [treenode, draw] {$2$}
%             child {node [rectnode, draw]{$4$}}
%             child {node [rectnode, draw]{$5$}}
%         }
%         child { node [rectnode, draw] {$3$}
%         };
%     \end{tikzpicture}
%     \caption{Illustration for the basic step.}
%     \label{fig:gbscProof1}
% \end{figure}

In Figure \ref{fig:gbscProof}, a circle represents a single node and a rectangle  represents a leaf or a subtree. The content in the nodes are indices of the nodes. Node $i(1)$ and node $i(2)$ are in the $i$-th layer. $p_{i(j)}$ is the sum of probabilities of all the children nodes of node $i(j)$. 

% We prove by contradiction. WLOG, assume that $p_{k+2(1)} \ge 2^{-(k+1)}$. If $p_{k+2(2)} < p_{k+1(1)} - p_{k+1(2)}$, then moving node $k+2(2)$ to node $k+1(2)$ will make $|p_{k+1(1)} - p_{k+1(2)}|$ smaller\footnote{We do not want to describe the process of "moving" mathematically, because it will just make things harder to understand. In case that some readers may feel confused, we explain the idea more clearly. Simply speaking, moving node $i$ to node $j$ means moving all the leaves in node $i$ to node $j$ so that they become the leaves of node $j$. If there are many leaves to be moved, the process of moving is not unique.}, so $p_{k+2(2)} \ge p_{k+1(1)} - p_{k+1(2)}$. Thus, $p_{k+1(1)} + p_{k+1(2)} \ge 2p_{k+2(1)} + p_{k+2(2)}$. Similarly, $p_{k+2(2)} \ge p_{k(1)} - p_{k(2)}$. Thus, $p_{k(1)} + p_{k(2)} \ge 2p_{k(1)} - p_{k+2(2)} \ge 4p_{k+2(1)} + p_{k+2(2)}$. The same argument shows that $p_{1(1)} + p_{1(2)} \ge 2^{k+1}p_{k+2(1)} + p_{k+2(2)} > 1$, which is a contradiction.

Basically, we prove by contradiction. WLOG, assume that $p_{k+1(2)} \ge 2^{-k}$. Then we want to prove that the following propositions hold for any integer $n \in [1, k]$ by induction.
\begin{enumerate}[label=(\roman*)]
\item $p_{n(2)} \ge 2^{-n}$;
\item $p_{n-1(1)}\ge 2^{-(n-1)} + p_{k+1(1)}$.
\end{enumerate}
If (i) and (ii) holds, then $p_{0(1)} \ge 1 + p_{k+1(2)} > 1$, causing a contradiction.

\textbf{Basic step}. We first prove the case when $n=k$. If $p_{k+1(1)} < p_{k(1)} - p_{k(2)}$, then moving node $k+1(1)$ to node $k(2)$ will make $|p_{k(1)} - p_{k(2)}|$ smaller\footnote{We do not want to describe the process of "moving" mathematically, because it will just make things harder to understand. In case that some readers may feel confused, we explain the idea more clearly. Simply speaking, moving node $i$ to node $j$ means moving all the leaves in node $i$ to node $j$ so that they become the leaves of node $j$. If there are many leaves to be moved, the process of moving is not unique.}, so $p_{k+1(1)} \ge p_{k(1)} - p_{k(2)}$, i.e., $p_{k(2)}\ge p_{k(1)} - p_{k+1(1)} = p_{k+1(2)} \ge 2^{-k}$. (i) is true. $p_{k-1(1)} = p_{k+1(1)} + p_{k+1(2)} + p_{k(2)} \ge 2^{-(k-1)} + p_{k+1(1)}$. (ii) is true.

\textbf{Inductive step}. Assume that the proposition is true for $n=m+1(1\le m \le k-1)$. If $p_{k+1(1)} < p_{m(1)} - p_{m(2)}$, then moving node $k+1(1)$ to node $m(2)$ will make $|p_{m(1)} - p_{m(2)}|$ smaller, so $p_{k+1(1)} \ge p_{m(1)} - p_{m(2)}$. By inductive assumption, $p_{m(1)} \ge 2^{-m} + p_{k+1(1)}$, so $p_{m(2)} \ge p_{m(1)} - p_{k+1(1)} \ge 2^{-m}$. (i) is true. $p_{m-1(1)}= p_{m(1)} + p_{m(2)} \ge 2^{-(m-1)} + p_{k+1(1)}$. (ii) is true. Therefore, the case when $n=m$ also holds.

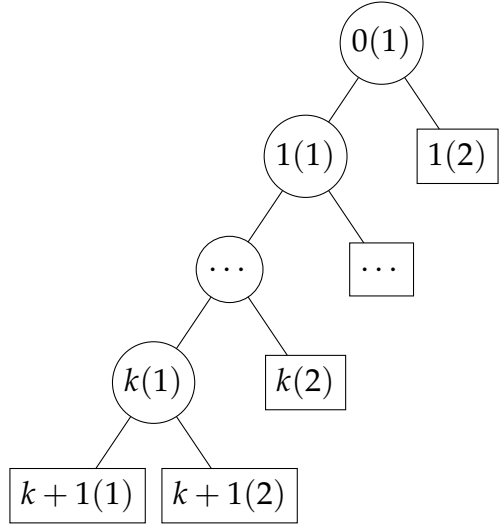
\begin{figure}[H]
    \centering
    \begin{tikzpicture}[level distance=1.5cm,
      level 1/.style={sibling distance=2cm},
      level 2/.style={sibling distance=2cm}]
    \node [treenode, draw] {$0(1)$}
    child 
    {
        node [treenode, draw] {$1(1)$}
        child
        {
            node [treenode, draw]{$\dots$}
            child
            {
                node [treenode, draw]{$k(1)$}
                child
                {
                    node [rectnode, draw]{$k+1(1)$}
                }
                child {node [rectnode, draw]{$k+1(2)$}}
            }
            child {node [rectnode, draw]{$k(2)$}}
        }
        child {node [rectnode, draw]{$\dots$}}
    }
    child { node [rectnode, draw] {$1(2)$} };
    \end{tikzpicture}
    \caption{Illustration for the proof of Proposition \ref{prop1}.}
    \label{fig:gbscProof}
\end{figure}

\end{proof}

 % Adds your theory section 
%----------------------------------------------------------------------------------------
% Research design
%----------------------------------------------------------------------------------------

\section{Greedy Algorithms Based on GBSC}

\subsection{Basic Structure}

The basic idea of the algorithm is shown below. In practice, the most difficult part is how to find the best question (line 2). However, since usually $|\mathscr{A}|$ is small and $\mathscr{A}$ has a simple structure, this process is often very simple. This is especially the case when we deal with the Battleship problem later.

% \begin{algorithm}[H]
% 	\caption{Greedy algorithm based on GBSC} 
% 	\begin{algorithmic}[1]
%     \State initialize the initial dataset $data$
% 	 \Procedure{binary}{x}
% 	        \If {the length of $x$ is 1}
% 	                \State \Return $x$
% 	        \EndIf
%             \State initailize empty array $temp$
%             \For{every $i \in x$ }
%                     \State add $data[i]$ to $temp$
%             \EndFor
%             \State $s \gets $ sum of numbers in $temp$
%             \For{every $i \in temp$ }
%                     \State $i \gets i / s$
%             \EndFor
%             		\State the sum of probability of exons between $u$ and $v$ is closest to 0.5
%             		\If{target exon is between $u$ and $v$ }
%             		        \State $x \gets$ $x[u:v]$
%             		\Else 
%             		        \State $x \gets$ $x[:u] + x[v:]$
%             		\EndIf
%             \State \Return \Call{binary}{$x$}
%      \EndProcedure
% 	\end{algorithmic} 
	
% \end{algorithm}

\begin{algorithm}[H]
	\caption{Greedy algorithm based on GBSC} 
	\begin{algorithmic}[1]
	\Procedure{GBSC}{node, p}
	    \State choose question $X \in \mathscr{A}$ that makes the left and right subtrees most average
	    \State $p_1 \gets$ the possibilities in condition that the question is true
	    \State \Call{GBSC}{$root.left, p_1$}
	    \State $p_2 \gets$ the possibilities in condition that the question is false
	    \State \Call{GBSC}{$root.right, p_2$}
    \EndProcedure
	\State read $p$, where $p[i]$ is the probability of $i$
	\State initialize $root$
	\State \Call{GBSC}{$root, p$}
	\end{algorithmic} 
	
\end{algorithm}

\subsection{Dealing with 1-player Battleship Problem Using GBSC}
Recall the 1-player Battleship problem:
$$    \min_{Q_1, Q_2, \dots, Q_T} T  , Q_t = (i_t, j_t)$$
$$    \mathcal{X}_t = \{X|(X\in \mathcal{X}_{t-1}) \wedge (X_{i_t j_t} = X^{*}_{i_t j_t})\} , 1 \leq t \leq T$$
$$\mathcal{X}_0 = \{ X_1, X_2, X_3, \dots, X_n \},	\mathcal{X}_T = \{X^{*}\}, \text{ random variable } X^* \in \mathcal{X}_0 $$

$\mathcal{X}_0$ is a set of $0-1$ matrices representing the entire possible board space without any prior knowledge, and $X^*$ is the target board chosen by the opponent (randomly sampled from $\mathcal{X}_0$ in our case).

More precisely, we would like to minimize the average number of tries $T$ over all possible targets $X$.

By assumption, the target board $X^{*}$ is randomly sampled from $\mathcal{X}_0$. The key to minimizing the number of queries $t$ is to choose the queries $Q_1, Q_2, \dots, Q_T$ such that the size of remaining possible boards, $|\mathcal X_{t}|$ converges to $1$ quickly. 

Using the conclusions from GBSC coding, we can devise a way to minimize the average number of queries.
At $t$-th query, the hit probability matrix is 
$$P^{(t)}_{ij} = \frac{\sum_{X \in \mathcal{X}_{t-1}} X_{ij}}{|\mathcal{X}_{t-1}|}   $$

Obviously, the hit probability of some previously queried $(i,j)$ is either $0$-missed or $1$-hit. So there is no need to query the same grid twice.

we choose $Q_t$ by this greedy strategy, deducted from GBSC:
$$Q^*_t=(i^*_t, j^*_t) = \arg\min_{(i,j)} |P^{(t)}_{ij} - \frac{1}{2}| $$.

Ideally, each query $(i^*_t, j^*_t)$ will half the size of the remaining boards space, giving us the average number of queries $\bar{T} = \lceil \log n \rceil$, where $n = |\mathcal{X}_0|$. Of course, this is not always possible in the real world, since there may not be exist a $(i,j)$ with $p^t_{ij} = \frac{1}{2}$. Nonetheless, we can always choose the grid with hit probability closest to $\frac{1}{2}$ and get a sub-optimal algorithm.

\begin{algorithm}[H]
	\caption{Solving 1-player Battleship using GBSC.} 
	\begin{algorithmic}[1]
    \State calculate initial possible board space $\mathcal{X}_0$
    \State $t \gets 0$
    \While{$|\mathcal{X}_t| > 1$}
        \State $t \gets t + 1$
        \State $P^{(t)}_{ij} \gets \frac{\sum_{X \in \mathcal{X}_{t-1}} X_{ij}}{|\mathcal{X}_{t-1}|}   $
        \Comment{calculate hit probability matrix}
        % \State $Q^*_t=(i^*_t, j^*_t) = \arg\min_{(i,j)} |P^t_{ij} - \frac{1}{2}| $ // Choose locally optimal query 
        \State $Q^*_t \gets \arg\min_{(i,j)} |P^t_{ij} - \frac{1}{2}| $ 
        \Comment{choose locally optimal query}
        \State $\mathcal{X}_t \gets \{X|(X\in \mathcal{X}_{t-1}) \land (X_{i^*j^*} = X^{*}_{i^*j^*})\}$
        %\Comment{eliminate some boards in $\mathcal{X}_{t-1}$ and get $\mathcal{X}_t$} 
        \Comment{eliminate some boards in $\mathcal{X}_{t-1}$} 
    \EndWhile
    \State $\mathcal{X}_t = \{X^*\}$
	\end{algorithmic} 
	
\end{algorithm}

% Later we will demonstrate the effectiveness of this algorithm by playing simulated games of Battleship.

%----------------------------------------------------------------------------------------
% Analysis
%----------------------------------------------------------------------------------------

% \section{Results} \label{Results}
% \subsection{GBSC: 1-player Battleship}

We did some tests to show the effectiveness of the GBSC-based algorithm on dealing with 1-player Battleship problem. To speed up computation, we consider 3 ships of length 5, 4 and 3 placed on a $10\times 10$ board. In total, there are $n=\num{1850736}$ possible board layouts, which is just under $2^{21}$. Here we define the number of tries $t$ as the number of queries to determine the target board $X^*$.\footnote{Sometimes two or more boards may be different yet indistinguishable, and the terminal entropy is nonzero, for example if two boats of length 3 and 4 are adjacent and from a L-shape pattern. Since this does not affect the outcomes, we consider these boards to be identical.}Since each query (bombing) in average gives at most $1$ bit of information,  the theoretical minimal average number of tries is $\bar{T}^* = \lceil \log n \rceil  = 21$ queries. Of course, one can sometimes get lucky and determine $X^*$ in less than $21$ tries.

Here we have 10 randomly chosen target boards for our algorithm to play against. The vertical axis is calculated by $H(X) = \log |\mathcal{X}_t|$. Again, we assume that every target board has the same probability of being chosen.
\begin{figure}[H]
    \centering
    \includegraphics[width=0.5\textwidth]{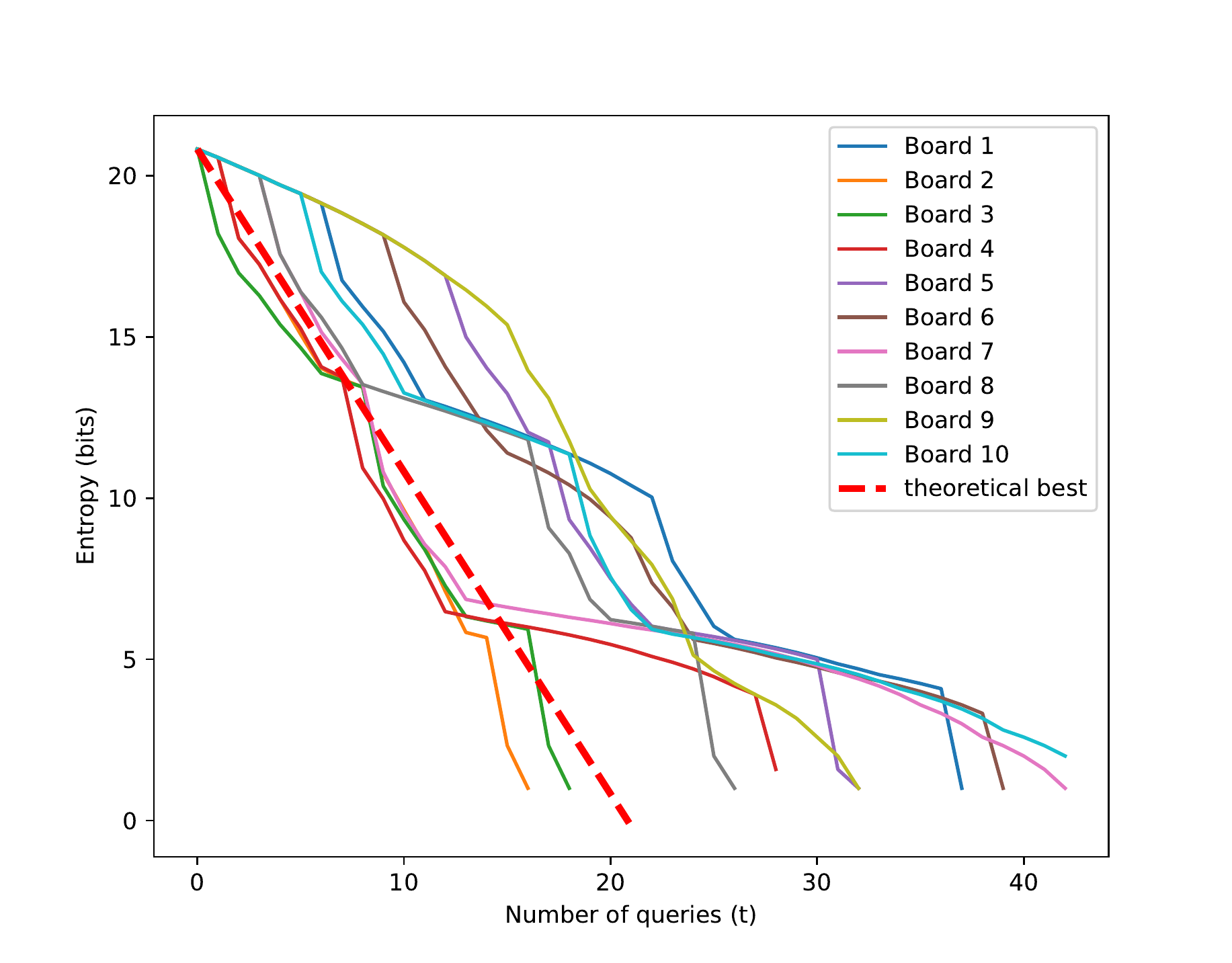}
    \caption{Algorithm against 10 random boards v.s. theoretical best average}
    \label{fig:Battleship10}
\end{figure}

We can observe two types of stages in the declination of entropy: at first, the entropy declination is usually slow, as the algorithm knows very little about the board and is exploring the board. When a "hit" occurs, the entropy drops rapidly, before the information provided by this hit is fully exploited, and this process repeats until entropy drops to 0.

For this small scale test, the results are very diverged. Some were very lucky and better than the theoretical best average $21$, and some are almost 2 times of theoretical best average. We ran the same test on a larger scale, on 1000 randomly chosen target boards and got some interesting results.

\begin{figure}[htb]
    \centering
    \includegraphics[width=0.5\textwidth]{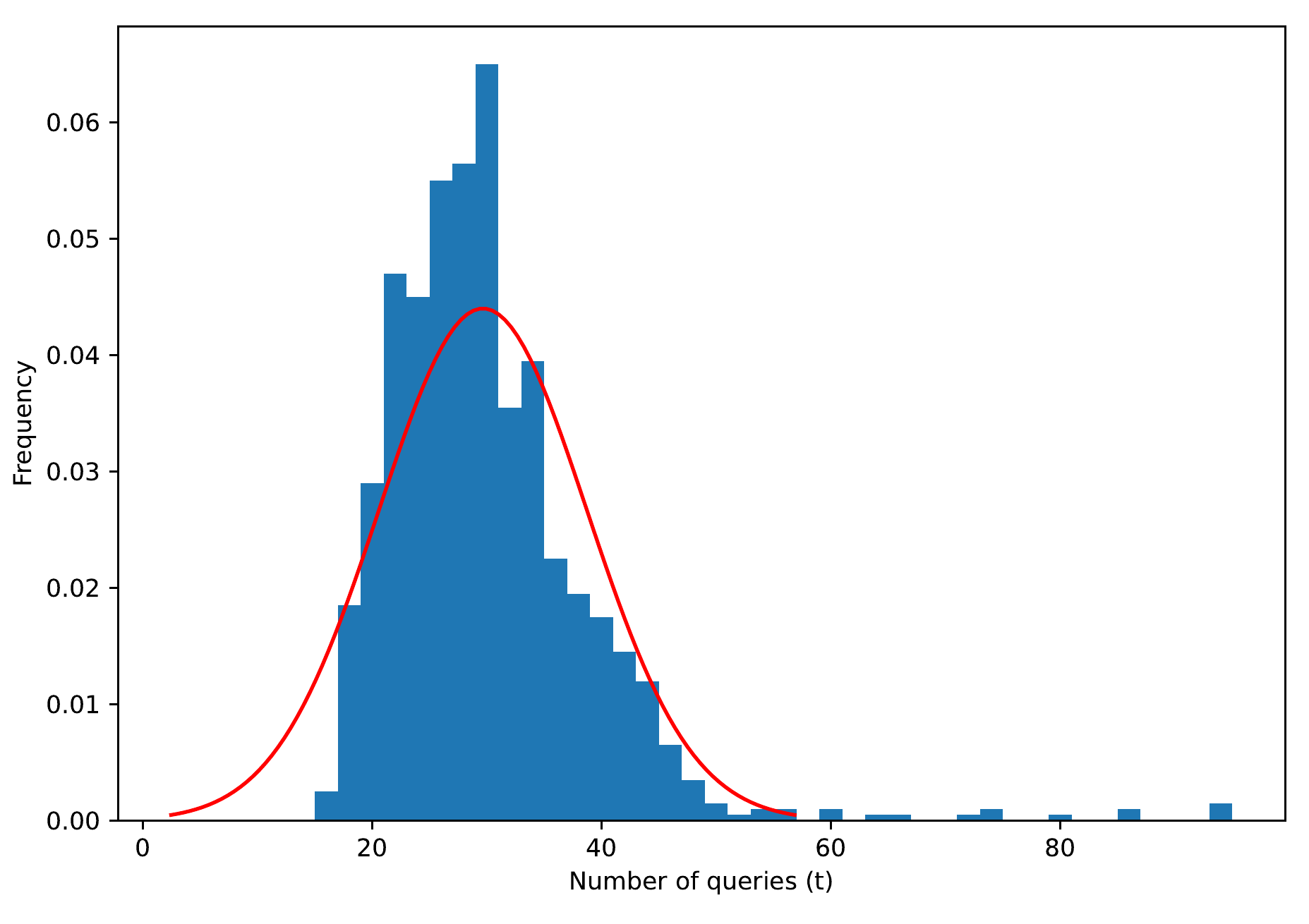}
    \caption{Distribution of queries to determine target board $X^*$}
    \label{fig:Hist1000}
\end{figure}
The mean number of queries $\bar{T}= 29.651$ and the standard deviation $\sigma =  9.062$. From this we can conclude that our GBSC-based algorithm performs reasonably well for Battleship, since $\bar{T}\approx 1.4 \bar{T}^*$. We can also take a look at the actual strategy of our algorithm for one game. This visualization (Fig.\ref{fig:Pmatrix}) shows how the algorithm prefers hit probability closest to 0.5.
\begin{figure}
    \centering
    \begin{subfigure}{0.35\textwidth}
        \includegraphics[scale=0.4]{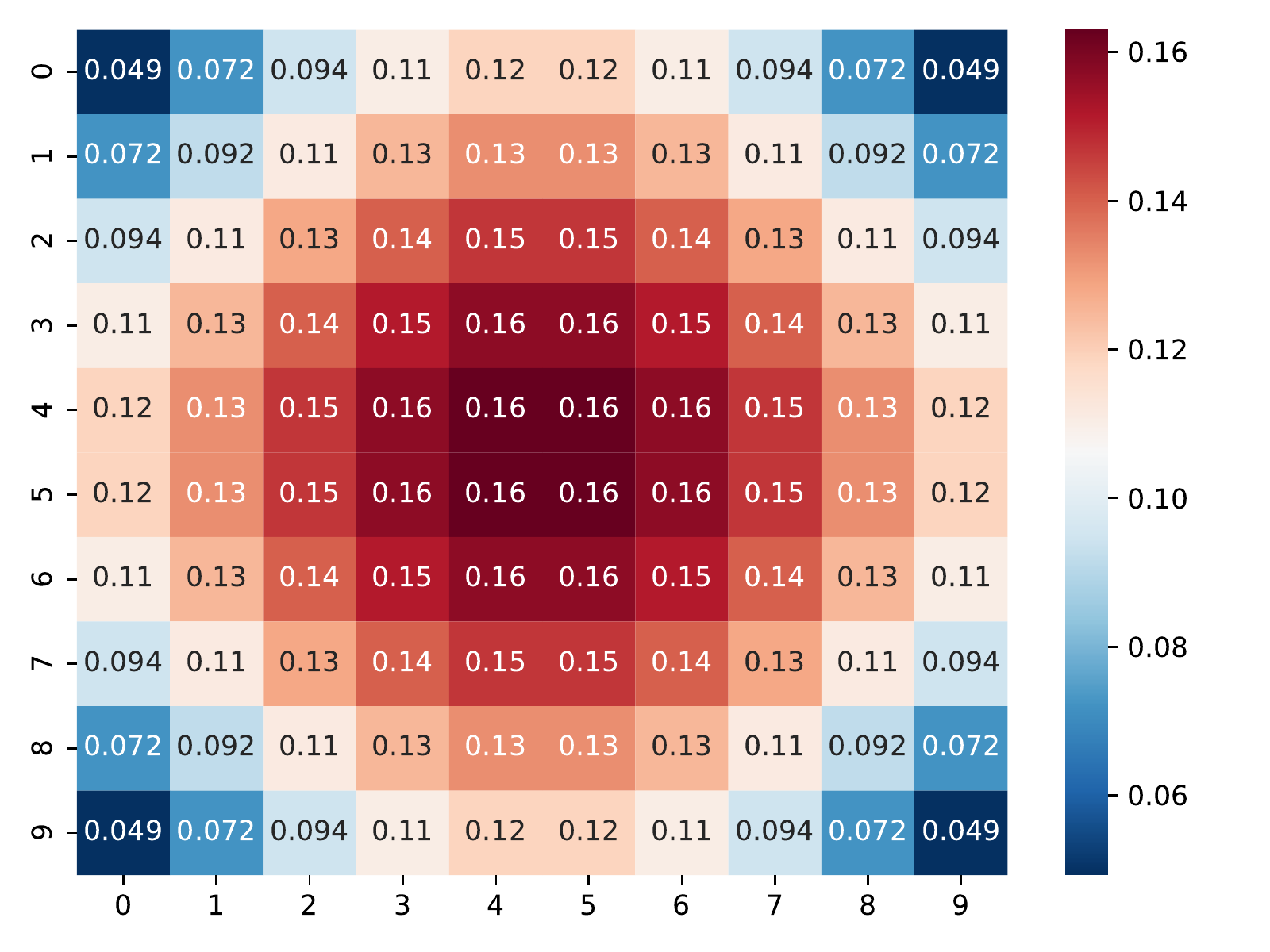}
        \caption{t=0}
    \end{subfigure}
    \begin{subfigure}{0.35\textwidth}
        \includegraphics[scale=0.4]{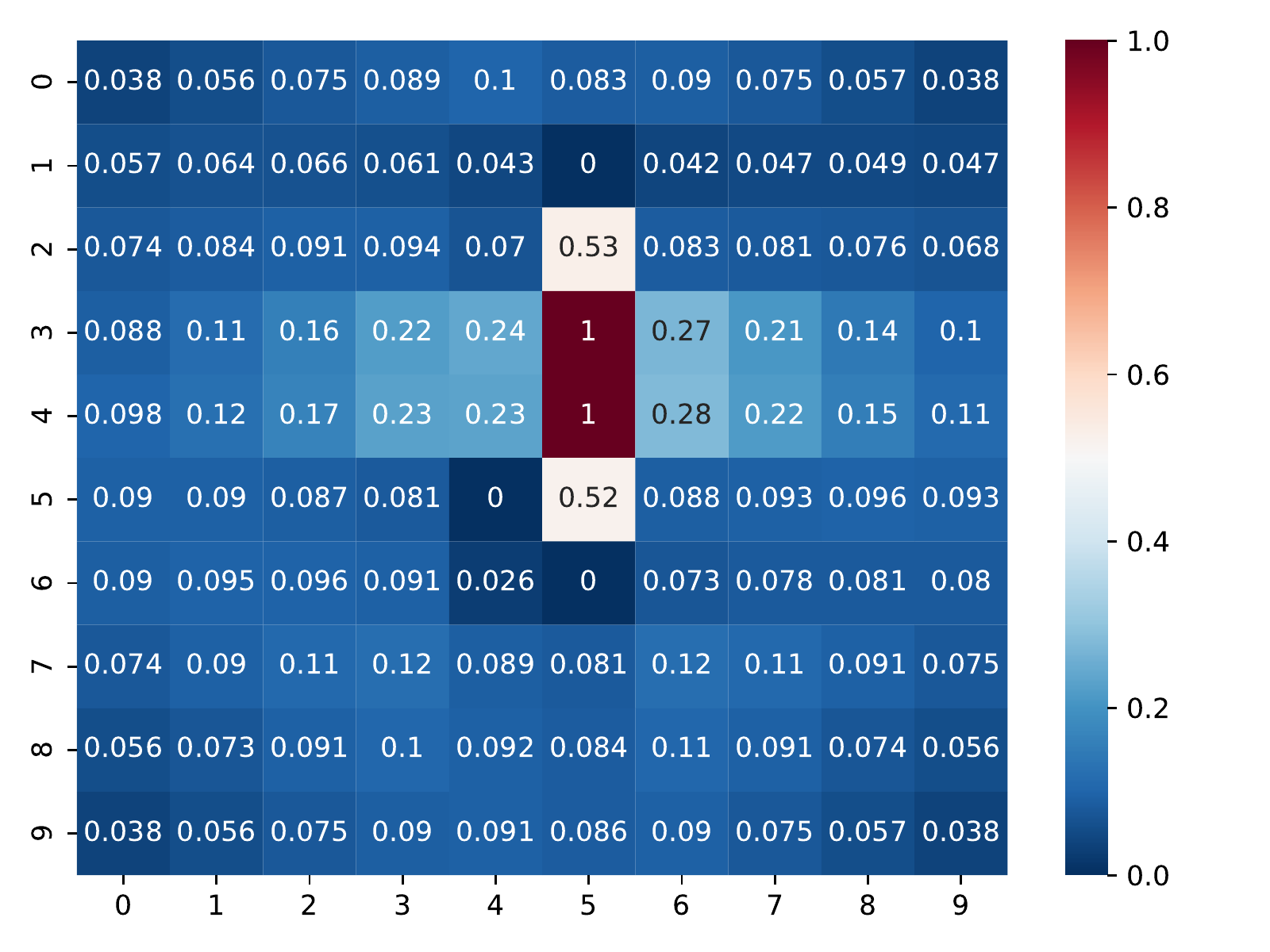}
        \caption{t=5}
    \end{subfigure}

    \begin{subfigure}{0.35\textwidth}
        \includegraphics[scale=0.4]{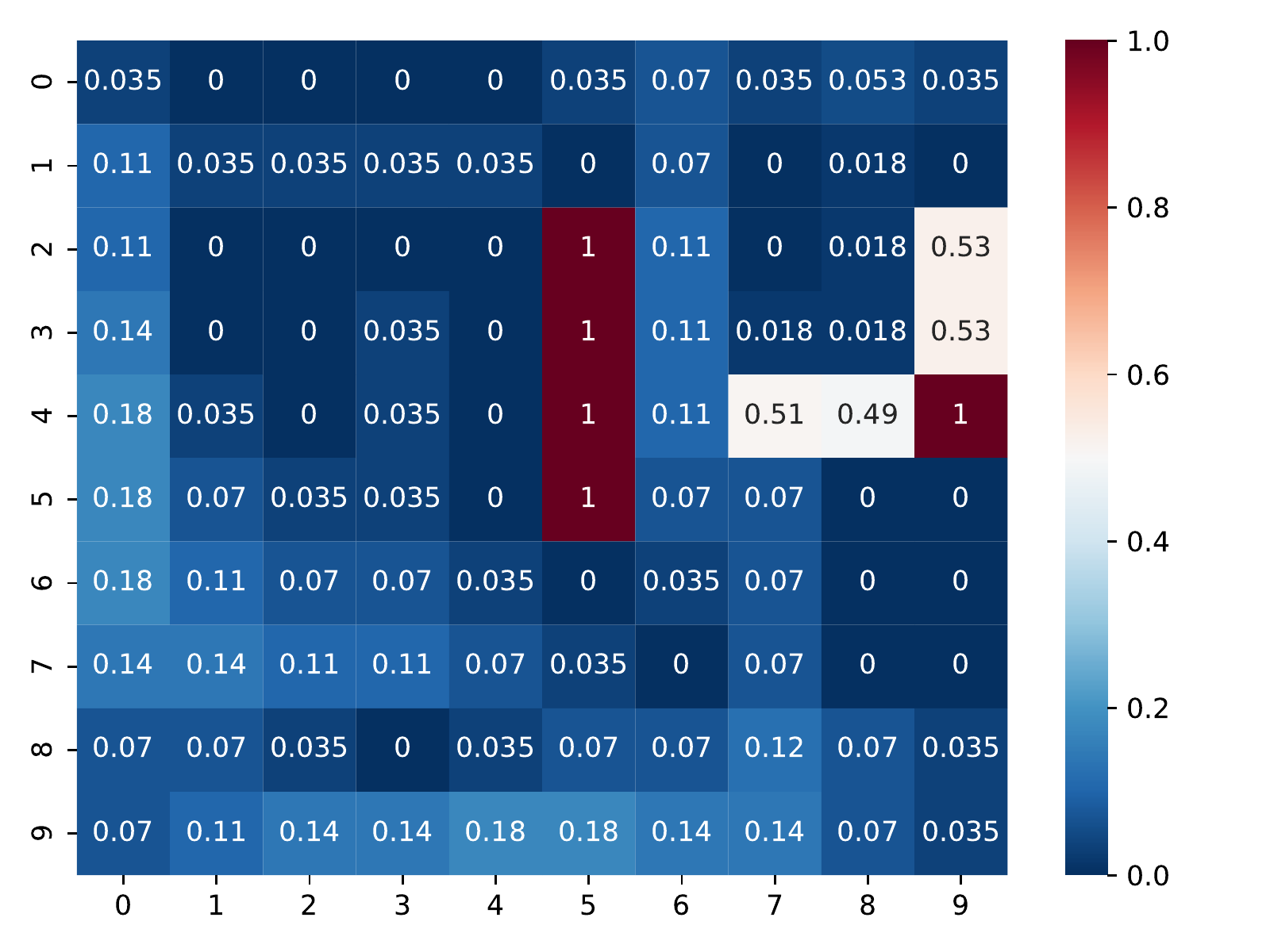}
        \caption{t=19}
    \end{subfigure}
    \begin{subfigure}{0.35\textwidth}
        \includegraphics[scale=0.4]{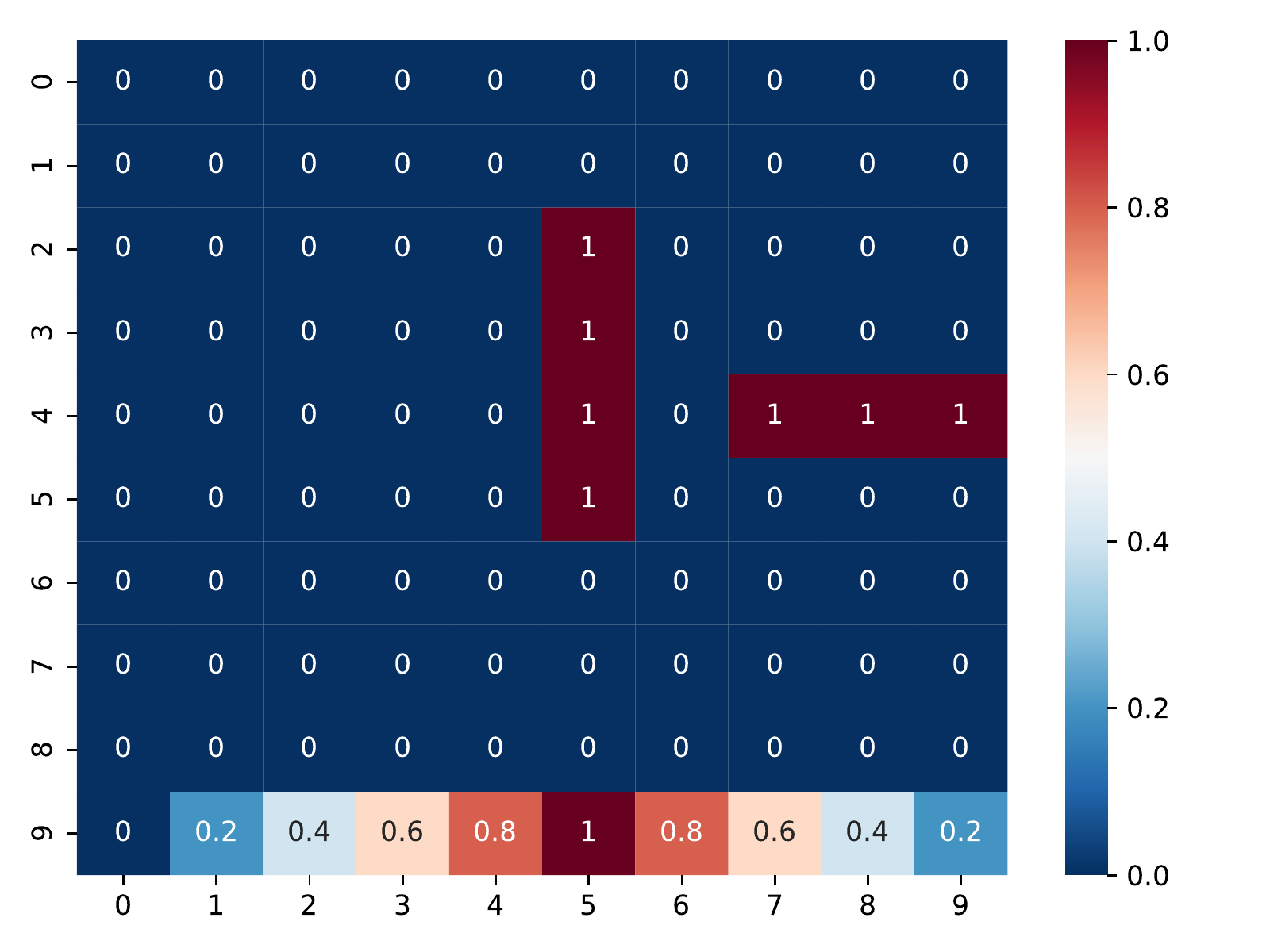}
        \caption{t=22}
    \end{subfigure}
    \caption{Hit probability matrix visualized. $p_{ij}:$ 0=No ship, 1=Has ship, (0,1)=Uncertain}
    \label{fig:Pmatrix}
\end{figure}

In another game, we demonstrate the hit pattern of our algorithm. (Fig.\ref{fig:Hitpattern}) Interestingly, our algorithm displays a diagonal search strategy, which is not intentionally designed in our GBSC based algorithm, but rather the result of computing the hit probability matrix, because a miss at $(i,j)$ reduces the conditional probability at $(i+1,j), (i-1,j),(i,j+1),(i,j-1)$. Also notice that the diagonal bombing lines are 3 grids apart, making a 4-long ship impossible to fit in the center area. Therefore, a diagonal searching pattern is great for quickly reducing entropy. This increases our confidence about this GBSC-based algorithm, since trying to hit the ships by bombing a diagonal pattern is a well-known strategy in this classic game, which also coincides with some previous deterministic approaches\cite{Rodin}.

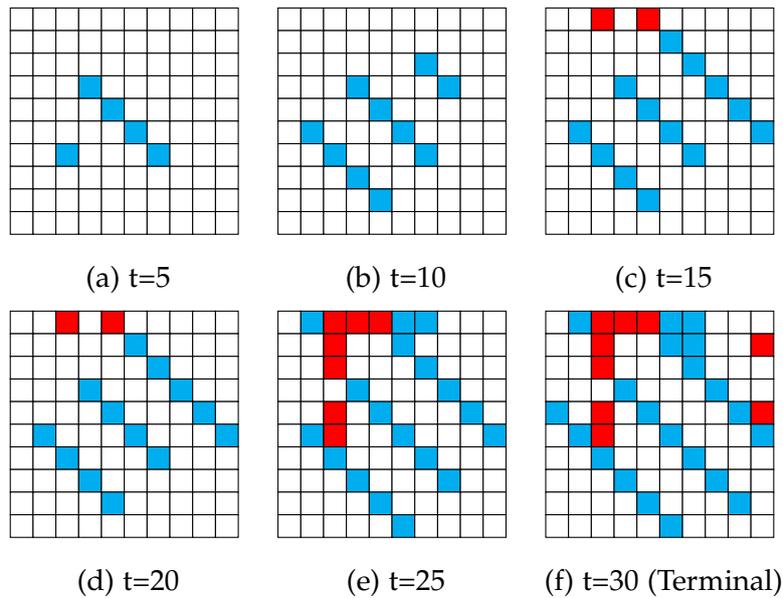
\begin{figure}[H]
    \begin{center}
    \begin{subfigure}{0.2\textwidth}
    \begin{tikzpicture}
    \matrix[boardstyle]{
        |[w]|& |[w]|& |[w]|& |[w]|& |[w]|& |[w]|& |[w]|& |[w]|& |[w]|& |[w]|\\
        |[w]|& |[w]|& |[w]|& |[w]|& |[w]|& |[w]|& |[w]|& |[w]|& |[w]|& |[w]|\\
        |[w]|& |[w]|& |[w]|& |[w]|& |[w]|& |[w]|& |[w]|& |[w]|& |[w]|& |[w]|\\
        |[w]|& |[w]|& |[w]|& |[b]|& |[w]|& |[w]|& |[w]|& |[w]|& |[w]|& |[w]|\\
        |[w]|& |[w]|& |[w]|& |[w]|& |[b]|& |[w]|& |[w]|& |[w]|& |[w]|& |[w]|\\
        |[w]|& |[w]|& |[w]|& |[w]|& |[w]|& |[b]|& |[w]|& |[w]|& |[w]|& |[w]|\\
        |[w]|& |[w]|& |[b]|& |[w]|& |[w]|& |[w]|& |[b]|& |[w]|& |[w]|& |[w]|\\
        |[w]|& |[w]|& |[w]|& |[w]|& |[w]|& |[w]|& |[w]|& |[w]|& |[w]|& |[w]|\\
        |[w]|& |[w]|& |[w]|& |[w]|& |[w]|& |[w]|& |[w]|& |[w]|& |[w]|& |[w]|\\
        |[w]|& |[w]|& |[w]|& |[w]|& |[w]|& |[w]|& |[w]|& |[w]|& |[w]|& |[w]|\\
        }; 
        \end{tikzpicture} 
        \caption{t=5}
        \end{subfigure}
    \begin{subfigure}{0.2\textwidth}
    \begin{tikzpicture}
    \matrix[boardstyle]{
        |[w]|& |[w]|& |[w]|& |[w]|& |[w]|& |[w]|& |[w]|& |[w]|& |[w]|& |[w]|\\
        |[w]|& |[w]|& |[w]|& |[w]|& |[w]|& |[w]|& |[w]|& |[w]|& |[w]|& |[w]|\\
        |[w]|& |[w]|& |[w]|& |[w]|& |[w]|& |[w]|& |[b]|& |[w]|& |[w]|& |[w]|\\
        |[w]|& |[w]|& |[w]|& |[b]|& |[w]|& |[w]|& |[w]|& |[b]|& |[w]|& |[w]|\\
        |[w]|& |[w]|& |[w]|& |[w]|& |[b]|& |[w]|& |[w]|& |[w]|& |[w]|& |[w]|\\
        |[w]|& |[b]|& |[w]|& |[w]|& |[w]|& |[b]|& |[w]|& |[w]|& |[w]|& |[w]|\\
        |[w]|& |[w]|& |[b]|& |[w]|& |[w]|& |[w]|& |[b]|& |[w]|& |[w]|& |[w]|\\
        |[w]|& |[w]|& |[w]|& |[b]|& |[w]|& |[w]|& |[w]|& |[w]|& |[w]|& |[w]|\\
        |[w]|& |[w]|& |[w]|& |[w]|& |[b]|& |[w]|& |[w]|& |[w]|& |[w]|& |[w]|\\
        |[w]|& |[w]|& |[w]|& |[w]|& |[w]|& |[w]|& |[w]|& |[w]|& |[w]|& |[w]|\\
        }; 
        \end{tikzpicture} 
        \caption{t=10}
    \end{subfigure}
    \begin{subfigure}{0.2\textwidth}
    \begin{tikzpicture}
    \matrix[boardstyle]{
        |[w]|& |[w]|& |[r]|& |[w]|& |[r]|& |[w]|& |[w]|& |[w]|& |[w]|& |[w]|\\
        |[w]|& |[w]|& |[w]|& |[w]|& |[w]|& |[b]|& |[w]|& |[w]|& |[w]|& |[w]|\\
        |[w]|& |[w]|& |[w]|& |[w]|& |[w]|& |[w]|& |[b]|& |[w]|& |[w]|& |[w]|\\
        |[w]|& |[w]|& |[w]|& |[b]|& |[w]|& |[w]|& |[w]|& |[b]|& |[w]|& |[w]|\\
        |[w]|& |[w]|& |[w]|& |[w]|& |[b]|& |[w]|& |[w]|& |[w]|& |[b]|& |[w]|\\
        |[w]|& |[b]|& |[w]|& |[w]|& |[w]|& |[b]|& |[w]|& |[w]|& |[w]|& |[b]|\\
        |[w]|& |[w]|& |[b]|& |[w]|& |[w]|& |[w]|& |[b]|& |[w]|& |[w]|& |[w]|\\
        |[w]|& |[w]|& |[w]|& |[b]|& |[w]|& |[w]|& |[w]|& |[w]|& |[w]|& |[w]|\\
        |[w]|& |[w]|& |[w]|& |[w]|& |[b]|& |[w]|& |[w]|& |[w]|& |[w]|& |[w]|\\
        |[w]|& |[w]|& |[w]|& |[w]|& |[w]|& |[w]|& |[w]|& |[w]|& |[w]|& |[w]|\\
        };
        \end{tikzpicture} 
        \caption{t=15}
    \end{subfigure}

    \begin{subfigure}{0.2\textwidth}
    \begin{tikzpicture}
    \matrix[boardstyle]{
        |[w]|& |[w]|& |[r]|& |[w]|& |[r]|& |[w]|& |[w]|& |[w]|& |[w]|& |[w]|\\
        |[w]|& |[w]|& |[w]|& |[w]|& |[w]|& |[b]|& |[w]|& |[w]|& |[w]|& |[w]|\\
        |[w]|& |[w]|& |[w]|& |[w]|& |[w]|& |[w]|& |[b]|& |[w]|& |[w]|& |[w]|\\
        |[w]|& |[w]|& |[w]|& |[b]|& |[w]|& |[w]|& |[w]|& |[b]|& |[w]|& |[w]|\\
        |[w]|& |[w]|& |[w]|& |[w]|& |[b]|& |[w]|& |[w]|& |[w]|& |[b]|& |[w]|\\
        |[w]|& |[b]|& |[w]|& |[w]|& |[w]|& |[b]|& |[w]|& |[w]|& |[w]|& |[b]|\\
        |[w]|& |[w]|& |[b]|& |[w]|& |[w]|& |[w]|& |[b]|& |[w]|& |[w]|& |[w]|\\
        |[w]|& |[w]|& |[w]|& |[b]|& |[w]|& |[w]|& |[w]|& |[w]|& |[w]|& |[w]|\\
        |[w]|& |[w]|& |[w]|& |[w]|& |[b]|& |[w]|& |[w]|& |[w]|& |[w]|& |[w]|\\
        |[w]|& |[w]|& |[w]|& |[w]|& |[w]|& |[w]|& |[w]|& |[w]|& |[w]|& |[w]|\\
        };
        \end{tikzpicture} 
        \caption{t=20}
    \end{subfigure}
    \begin{subfigure}{0.2\textwidth}
    \begin{tikzpicture}
    \matrix[boardstyle]{
        |[w]|& |[b]|& |[r]|& |[r]|& |[r]|& |[b]|& |[b]|& |[w]|& |[w]|& |[w]|\\
        |[w]|& |[w]|& |[r]|& |[w]|& |[w]|& |[b]|& |[w]|& |[w]|& |[w]|& |[w]|\\
        |[w]|& |[w]|& |[r]|& |[w]|& |[w]|& |[w]|& |[b]|& |[w]|& |[w]|& |[w]|\\
        |[w]|& |[w]|& |[w]|& |[b]|& |[w]|& |[w]|& |[w]|& |[b]|& |[w]|& |[w]|\\
        |[w]|& |[w]|& |[r]|& |[w]|& |[b]|& |[w]|& |[w]|& |[w]|& |[b]|& |[w]|\\
        |[w]|& |[b]|& |[r]|& |[w]|& |[w]|& |[b]|& |[w]|& |[w]|& |[w]|& |[b]|\\
        |[w]|& |[w]|& |[b]|& |[w]|& |[w]|& |[w]|& |[b]|& |[w]|& |[w]|& |[w]|\\
        |[w]|& |[w]|& |[w]|& |[b]|& |[w]|& |[w]|& |[w]|& |[b]|& |[w]|& |[w]|\\
        |[w]|& |[w]|& |[w]|& |[w]|& |[b]|& |[w]|& |[w]|& |[w]|& |[w]|& |[w]|\\
        |[w]|& |[w]|& |[w]|& |[w]|& |[w]|& |[b]|& |[w]|& |[w]|& |[w]|& |[w]|\\
        };
        \end{tikzpicture} 
        \caption{t=25}
    \end{subfigure}
     \begin{subfigure}{0.2\textwidth}
    \begin{tikzpicture}
    \matrix[boardstyle]{
        |[w]|& |[b]|& |[r]|& |[r]|& |[r]|& |[b]|& |[b]|& |[w]|& |[w]|& |[w]|\\
        |[w]|& |[w]|& |[r]|& |[w]|& |[w]|& |[b]|& |[b]|& |[w]|& |[w]|& |[r]|\\
        |[w]|& |[w]|& |[r]|& |[w]|& |[w]|& |[w]|& |[b]|& |[w]|& |[w]|& |[w]|\\
        |[w]|& |[w]|& |[w]|& |[b]|& |[w]|& |[w]|& |[w]|& |[b]|& |[w]|& |[w]|\\
        |[b]|& |[w]|& |[r]|& |[w]|& |[b]|& |[w]|& |[w]|& |[w]|& |[b]|& |[r]|\\
        |[w]|& |[b]|& |[r]|& |[w]|& |[w]|& |[b]|& |[w]|& |[w]|& |[w]|& |[b]|\\
        |[w]|& |[w]|& |[b]|& |[w]|& |[w]|& |[w]|& |[b]|& |[w]|& |[w]|& |[w]|\\
        |[w]|& |[w]|& |[w]|& |[b]|& |[w]|& |[w]|& |[w]|& |[b]|& |[w]|& |[w]|\\
        |[w]|& |[w]|& |[w]|& |[w]|& |[b]|& |[w]|& |[w]|& |[w]|& |[b]|& |[w]|\\
        |[w]|& |[w]|& |[w]|& |[w]|& |[w]|& |[b]|& |[w]|& |[w]|& |[w]|& |[w]|\\
        };
        \end{tikzpicture} 
        \caption{t=30 (Terminal)}
    \end{subfigure}
    \caption{Hit pattern for a typical Battleship game played by GBSC-based algorithm}
    \label{fig:Hitpattern}
    \end{center}
\end{figure}

\subsection{Another Try to Solving the DNA Detection Problem}
Besides the greedy algorithm based on huffman coding, we can also use the GBSC to solve the DNA detection problem.

% \begin{enumerate}
%     \item First we choose a continuous stretch of DNA, in which the sum of probabilities of exons is most close to 0.5.
%     \item Second we examine whether the target exon is in the stretch we found in the first step.
%     \item If the target is in the stretch, we can set the probabilities of exons outside the stretch to 0 and normalize the probabilities of other exons to make the sum of their probabilities equal to 1; if the target is not in the stretch, we set the the probabilities of exons in the stretch to 0 and normalize the probabilities of other exons to make the sum of their probabilities equal to 1
%     \item Repeat the above steps until there is only one exon whose probability is not equal to 0 and it is the target exon.
% \end{enumerate}
We compared the output and time cost of the two algorithms by using the same random seed to generate random sequences (assuming probability at every point is iid) of different lengths and using the two algorithms to compute the expected number of detections respectively. We generated 100 sequences for each length to reduce random error. Here are the results.

The average time cost for two algorithm is shown in Fig.\ref{fig:runningtime}.
\begin{figure}[H]
    \centering
    \includegraphics[width=0.45\textwidth]{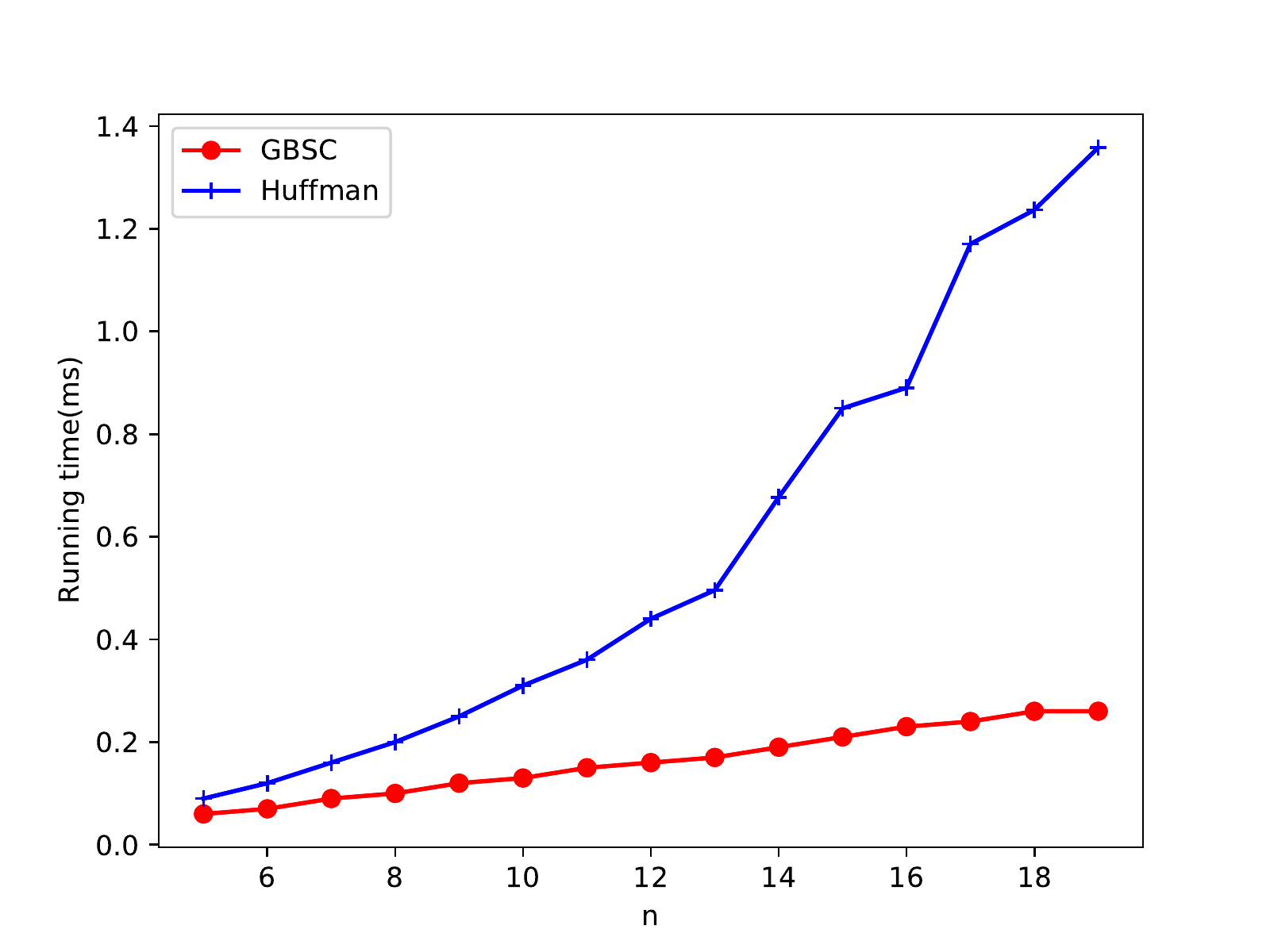}
    \caption{Running time of the two algorithms.}
    \label{fig:runningtime}
\end{figure}

It can be seen from the figures that the time complexity of GBSC is nearly linear, while the time complexity of the Huffman-based algorithm is not. The GBSC-based algorithm runs much faster than the Huffman-based algorithm algorithm. Besides, the the Huffman-based algorithm is not robust, and would take quite a long time to give an output when initial data is not very ideal. For example, one of the randomly generated distributions with size 36 took it took about 30 seconds for the Huffman-based algorithm to calculate the result, while most other distributions usually took about 10ms. In comparison, the GBSC reliably gave results in a relatively short time.
%\vspace{1ex}

The expected lengths calculated by two algorithm are shown in Fig.\ref{fig:GBSCvsHuffman}.
\begin{figure}[H]
    \centering
    \includegraphics[width=0.45\textwidth]{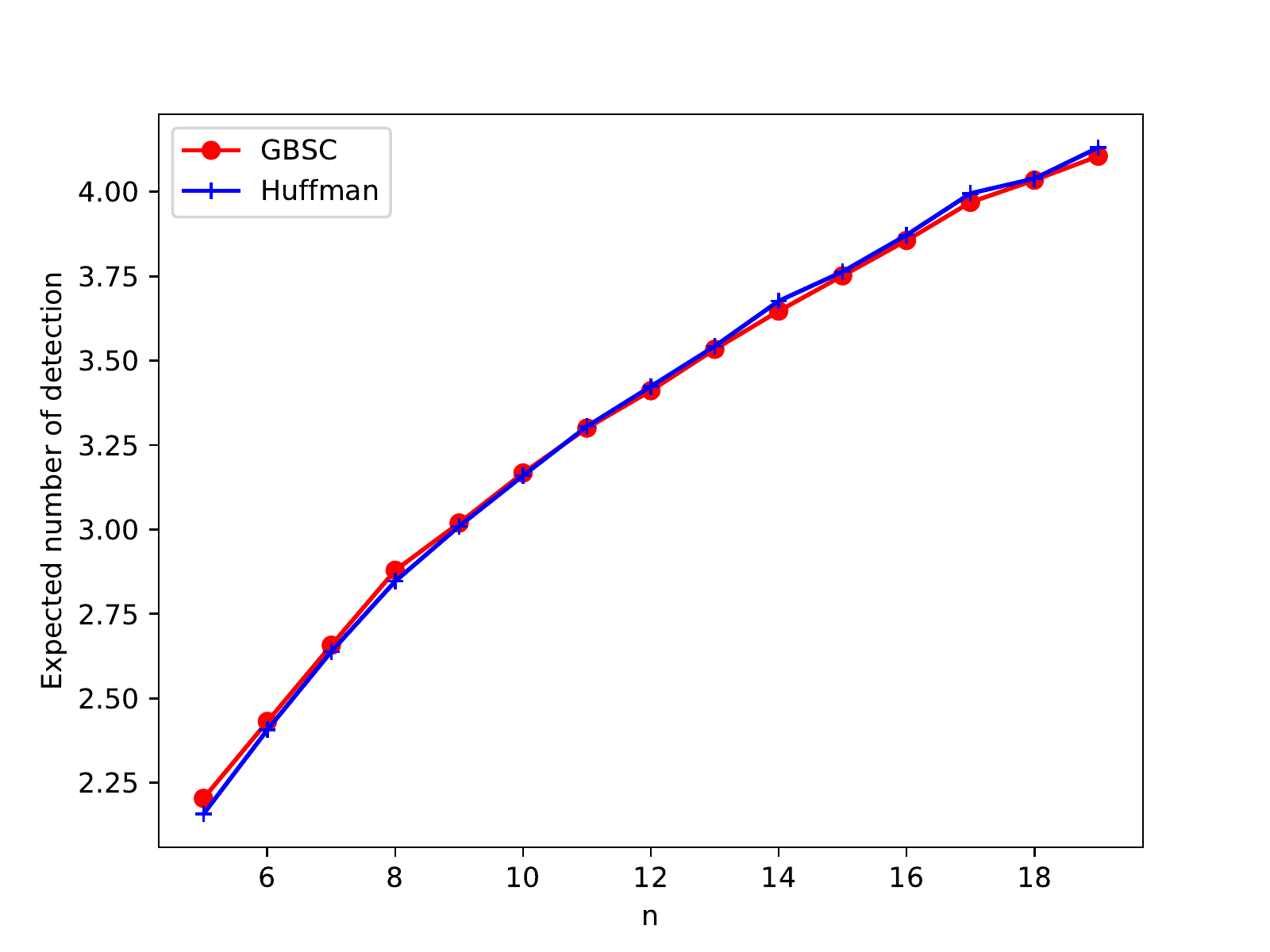}
    \caption{Expected number of detections calculated by the two algorithms.}
     \label{fig:GBSCvsHuffman}
\end{figure}
It is easy to see that the two figures are very similar, which means that the outputs of two algorithm in our experiment are almost the same.

 % Adds your research design
% \input{content/results}

\section{Future Work}

Here we propose some aspects that can be further researched.

\begin{itemize}
    \item For the DNA detection problem, our algorithm only works the simplified problem where there is only one target exon. A more general efficient multi-exon detection algorithm remains to be discovered.

    \item Finding the true optimal solution of the DNA detection problem and the Battleship problem is probably a NP-hard problem. Can we prove it?

    \item In this report, we implicitly assume that the distribution $p(x)$ is independent, i.e., $\forall i, j \in \mathcal{X}$, $p(i)$ and $p(j)$ are independent. If this condition is canceled, the shape of possible decision trees will not be arbitrary, making a better algorithm possible.
    
    \item Currently the our GBSC algorithm runs brute force to find the separation with probability closest to $\frac{1}{2}$. This can be time-costly for larger problems, for example, Battleship with a larger board or more ships, so there may be room for optimizations, for example maybe use Monte-Carlo method to get a good-enough approximation.
\end{itemize}

%----------------------------------------------------------------------------------------
% Conclusion
%----------------------------------------------------------------------------------------

\section{Conclusion}

In this paper, we purposed a generalized model to similar querying problems where the questions we ask are limited to a subset $\mathscr{A} \subsetneq 2^\mathcal{X}$, by analyzing three different types of problems: Huffman-coding-equivalent problems, the 1-player Battleship problem, and the DNA detection problem. 

Inspired by Huffman coding and greedy decision trees, we proposed two coding schemes: one based on Huffman coding that merges two nodes greedily if possible, and another called GBSC (Greedy Binary Separation Coding). We then proved that the expected code length of GBSC achieves the Shannon Code bound, meaning GBSC is at least as good as Shannon Coding. 

To see the effectiveness of these two algorithms in real world applications, we applied the Huffman-based coding to the DNA detection problem, and GBSC to both the 1-player Battleship and the DNA detection problem to see how well they perform in terms of average coding length, or equivalently, the average number of queries to determine the target random variable. For the DNA detection problem, out of \num{10000} tests, the Huffman-based algorithm gives expected length $L$ less than $1.1L^*$ in most cases, where $L^*$ is the true optimal calculated by brute force. We also compared the Huffman-based algorithm and GBSC algorithm, and found that while the two algorithm yielded very similar expected lengths, GBSC was much more time-efficient and therefore suitable for larger problems. In 1-player Battleship with a $10\times 10$ board and 3 ships, out of the \num{1000} tests, the GBSC-based algorithm achieves an average of $\bar{T}\approx 1.4 \bar{T}^*=29.651$ queries, where $\bar{T}^*$ is the theoretical best average number of queries. Overall, we can conclude that GBSC and the Huffman-based greedy code performs decently well in solving query-based decision problems that cannot be solved by the original Huffman coding algorithm.
 % Adds your conclusion
%----------------------------------------------------------------------------------------
% Bibliography
%----------------------------------------------------------------------------------------
\newpage % Includes a new page

\pagenumbering{roman} % Changes page numbering to roman page numbers
%\bibliography{literature}

\bibliography{literature.bib} % Add the filename of your bibliography
\bibliographystyle{IEEEtran} % Defines your bibliography style

% For citing, please see this sheet: http://merkel.texture.rocks/Latex/natbib.php % Adds your references
%----------------------------------------------------------------------------------------
% Appendix
%----------------------------------------------------------------------------------------
\newpage % Includes a new page
\section*{Appendix} % Stars disable section numbers
% \appendix % Uncomment if you want to add an "automatic" appendix
\pagenumbering{Roman} % Changes page numbering to Roman page numbers

The code of the experiments in the paper is available at  \url{https://github.com/MadCreeper/Constrained-Optimal-Querying-Huffman-Coding-and-Beyond}.% Adds an appendix
%\input{content/9B-declaration} % Adds your declaration

% \input{content/0D-tutorial}

%---------------------------------------------------------------------------------

\end{document}